\newtheorem{theorem}{\textbf{Theorem}}
\newtheorem{lemma}{\textbf{Lemma}}
\newtheorem{proposition}{\textbf{Proposition}}
\begin{document}
\title{Analysis on Cache-enabled Wireless Heterogeneous Networks}
\author{Chenchen Yang,  Yao Yao, Zhiyong Chen, {\em Member, IEEE}, Bin Xia, {\em Senior Member, IEEE}\\
\thanks{This work has been accepted by IEEE Transactions on Wireless Communications and was partly presented in IEEE ICC 2015 \cite{myself}. C. Yang, Z. Chen, and B. Xia are with the Institute of Wireless Communications Technology (IWCT), Department of Electronic Engineering, Shanghai Jiao Tong University, Shanghai, 200240, P. R. China. Emails: \tt \{zhanchifeixiang, zhiyongchen, bxia\}@sjtu.edu.cn}
\thanks{Y. Yao is with Huawei Technologies Co., Ltd. Email: \tt \{yyao@eee.hku.hk\}}}
\maketitle
\begin{abstract}
Caching the popular multimedia content is a promising way to unleash the ultimate potential of wireless networks. In this paper, we contribute to proposing and analyzing the cache-based content delivery in a three-tier heterogeneous network (HetNet), where base stations (BSs), relays and device-to-device (D2D) pairs are included. We advocate to proactively cache the popular contents in the relays and parts of the users with caching ability when the network is off-peak. The cached contents can be reused for frequent access to offload the cellular network traffic. The node locations are first modeled as mutually independent Poisson Point Processes (PPPs) and the corresponding content access protocol is developed. The average ergodic rate and outage probability in the downlink are then analyzed theoretically. We further derive the throughput and the delay based on the \emph{multiclass processor-sharing queue} model and the continuous-time Markov process. According to the critical condition of the steady state in the HetNet, the maximum traffic load and the global throughput gain are investigated. Moreover, impacts of some key network characteristics, e.g., the heterogeneity of multimedia contents, node densities and the limited caching capacities, on the system performance are elaborated to provide a valuable insight.
\end{abstract}
\begin{keywords}
Caching, HetNets, content popularity, Poisson Point Processes, Markov process.
\end{keywords}
\section{INTRODUCTION}
The total mobile data traffic of 2020 will increase 1000 times compared with the 2010 traffic level \cite{Cisco}. Despite the deployment of the fourth generation Long Term Evolution (LTE) and LTE-Advanced systems, the rapidly increasing wireless data demands overwhelms the throughput increase that the wireless network could afford. Various innovative throughput-increasing methods have been investigated to tackle the ever-growing wireless data challenge, such as the heterogeneous network (HetNet) \cite{5Gmultitier} and the cache-enabled content-centric network \cite{diversity,kongtao,femto2}. The state of the art is elaborated in the perspective of the two aspects respectively in the following.

 \textbf{HetNets, bringing the network closer to users}: One widely regarded as the cornerstone technology is denser node deployment, including macro base station (BS), micro BS, pico BS, femto BS and relays. Such a HetNet decreases the distance between BSs/relays and users, and thus increases the area spectral efficiency, yielding the increase of network capacity \cite{capacity1,ppp2,Userassociation}. However, the exponential growth in traffic also requires the high-speed backhaul for the connection of different type of BSs/relays and content servers \cite{femto,seven}.

  \textbf{Cache-enabled content-centric networks, bringing the content closer to users}: It has been shown that $70\%$ of the wireless traffic is from multimedia contents, e.g., videos \cite{Cisco}. Meanwhile, the multimedia contents are not accessed with the same frequency. Only a small fraction ($5-10\%$) of ``popular" contents are consumed by the majority of the users, and the less popular contents are requested by a much smaller number of users \cite{Zipf}. Moreover, following the uncannily accurate Moore's law, a tremendous amount of computing and storage capacity is held by the intelligent terminal devices and networks. As such, the popular contents can be cached in BSs, relays and devices, bringing the content closer to users. It allows users to access to the cache-enabled nodes and reduces the duplicate content transmissions, mitigating the over-the-air traffic \cite{5G2}.

  Therefore, taking advantage of the caching capability within the wireless HetNet, the content diversity and network diversity can be exploited to relieve the burden of the fast growing traffic \cite{diversity,5G1,caching1}.
\subsection{Related Work}
The role of the caching technology in the fifth generation (5G) wireless network is demonstrated in \cite{5G1,5G2}. Urs Niesen \emph {et al.} investigate a large wireless caching network with the hierarchical tree structure of transmissions, and scaling results on the capacity region are derived. An arbitrary traffic matrix and cooperative transmissions over arbitrarily long links are assumed \cite{caching1}. \cite{RAN} introduces the distributed caching at the macro BSs to improve the network capacity and reduce the video stalling. The authors of \cite{femto} advocate to set up relays with caching ability in the cellular network to reduce the access delay. The content placement scheme has received significant attention, e.g., \cite{coding} proposes a novel coded caching scheme to improve both the local and the global caching gain. \cite{geographic} considers the scenario where a user in the overlapping coverage area can connect to any of the stations covering it. The optimal caching strategy maximizing the caching hit ratio is formulated by solving the Geographic Caching Problem.  Optimal request routing and content caching are investigated in \cite{routing} to minimize the average content access delay. In \cite{niuzhisheng}, the energy consumption is minimized by appropriately pre-caching popular contents. \cite{pushing} studies how to disseminate the content via cellular caching and Wi-Fi sharing to trade off the dissemination delay and the energy cost. Based on the content popularity, the cache-based multimedia content delivery scheme is proposed and analyzed in \cite{kongtao}. Terminal users can share the received content via opportunistic local connectivity to offload the traffic of cellular links in \cite{D2D}. \cite{grid} exploits redundancy of user requests and the storage capacity of terminal devices via dividing the cell into virtual square grids.

However, in the current research of caching, the assumption of global knowledge of the stationary network topology and the node connectivity graph is critical, and the regular grid network model is too optimistic and idealistic to fully capture the randomness and complexity of node locations in the HetNet nowadays. Different from the traditional system model, lots of researches have pointed out that the node location obeys
PPP instead of regular hexagonal grid in realistic HetNet \cite{capacity1,origin,ppp2,ppp3}. Two tiers of BS locations are modeled as independent PPPs in \cite{ppp1}, where joint resource partitioning and offloading are analyzed in the HetNet. \cite{capacity2} studies the optimal node density in homogeneous and heterogeneous scenarios by modeling cellular networks with PPP. The authors in \cite{SINR} model the node locations of the multi-tier HetNet as mutually independent PPPs, and analyze the system performance in terms of the average rate. \cite{cache-enable} takes the limited backhaul into consideration to analyze the performance of the homogeneous cache-enabled small cell network, where the nodes of the small base stations are stochastically distributed. A constant service rate is assumed when the files can be found in the local cache and the downlink capacity exceeds the threshold.
In \cite{trade-off}, disjoint circular clusters are scattered based on the hard-core PP. Requesting users and cache-enabled users are distributed with two independently homogeneous PPPs. The requesting users obtain the content from cache-enabled users in the same cluster via the out-of-band device-to-device (D2D) in the cellular network.

Furthermore, the traditional fetching and reactive caching methods doesn't intelligently utilize the service characteristic such as the traffic redundancy and the content popularity. Few studies considers the scenario where the radio access network (RAN) caching and the D2D caching coexist. Meanwhile, the performance of the wireless cooperative caching HetNet is not yet fully investigated. How much performance improvement actually can be reaped via the caching technology is urgent to be answered theoretically.
\subsection{Contributions}
Towards these goals, in this paper we analyze the scheme that when the network load is off-peak, the most popular contents can be cached at the nodes via broadcasting. The BSs, relays and cache-enabled users are cooperative to transmit contents in the HetNet. The main contributions of this paper are summarized as follows:

\begin{itemize}
  \item We consider the limited caching ability of both relays and parts of the users. Popular contents are cached when the network is off-peak. Besides the cellular communication, there exists the local content sharing links from the cache-enabled user to the users. When a user triggers a request, it can be responded by BSs, relays or the cache-enabled users.
  \item We model the node locations (BSs, relays and users) of the three-tier HetNet (BSs-users, relays-users, users with caching ability-users) as mutually independent PPPs. The content access protocol is then proposed, based on which the tier association priority is formulated.
  \item We derive analytical expressions of the average ergodic rate and outage probability for users in different Cases. Then with the modeling of the request arrival and departure process at the service node as a \emph{multiclass processor-sharing queue}, the throughput and delay of different classes are further analyzed based on the continuous-time Markov process.
  \item We propose the \emph{steady ruler} and the critical point for the HetNet to keep steady, according to which the throughput and the maximum traffic load over the entire network are then evaluated. Moreover, impacts of the cache-enabled users, content popularity and the limited storage capacity on the network performance are analyzed.
\end{itemize}

The remainder of the paper is organized as follows:
In Section \uppercase\expandafter{\romannumeral2},  we formulate the three-tier HetNet architecture and elaborate the tier association priority based on the content access protocol. The average ergodic rate and outage probability are derived in Section \uppercase\expandafter{\romannumeral3} and Section \uppercase\expandafter{\romannumeral4}. The performance gain in terms of the throughput and the delay are analyzed in Section \uppercase\expandafter{\romannumeral5}. In Section \uppercase\expandafter{\romannumeral6}, numerical results are presented. Finally, we give our conclusions in Section \uppercase\expandafter{\romannumeral7}.
\section{System model and Protocol description}\label{sec:system}
In this section, we first model the nodes of the three-tier HetNet as mutually independent PPPs with different densities. Then the  cache-enabled content access protocol is described. Afterwards, the probability of the tier association priority and the state of users are derived.
\subsection{Network Architecture}
Consider a three-tier wireless HetNet consisting of a number of macro BSs, relays and users as illustrated in Fig. \ref{system_model}. The nodes of the $i$-th tier ($i=0, 2, 3$ for the users, relays and BSs, respectively) are deployed based on an independent homogeneous PPP $\psi_i$ with intensity $\lambda_i$ \cite{capacity1,ppp1,ppp2}. Note that in the practical system there are more users than relays or BSs, so we consider $\lambda_0\gg\lambda_2>\lambda_3$ in this paper.  There are $N$ multimedia contents on the multimedia server, where all the contents are assumed to have the same size of $S$ $[$bits$]$. Each of the relays has a limited caching storage with the size of $M_2\times S$ $[$bits$]$, but only a part (e.g., the $0<\alpha<1$ proportion) of the users has caching ability and the corresponding size is $M_1 \times S$ $[$bits$]$, and $M_1\ll M_2\ll N$. According to Poisson processes, the locations of the cache-enabled users are distributed as a thinning homogeneous PPP with density $\lambda_1=\alpha\lambda_0$.

\begin{figure}[t]
\centering
\includegraphics[width=3.5in]{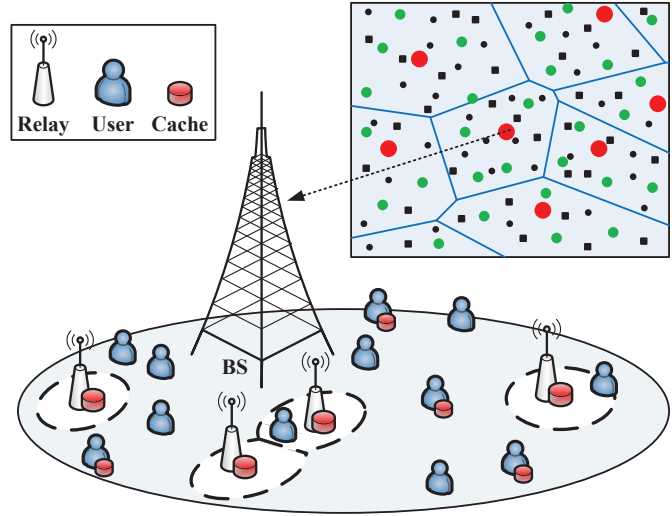}
\caption{Cooperative caching in heterogeneous networks: The plot on the top right side is a snapshot of different nodes deployed with PPPs: BSs (red circle) are overlaid with relays (green circle). Parts of users (black circle) have caching ability, the others (black rectangle) do not. The structure of a typical BS cell is highlighted in the lower plot.}
\label{system_model}
\end{figure}

It has been observed that people are always interested
in the most popular multimedia contents, where only a small portion of the contents are frequently accessed by the majority of users \cite{Zipf}. The higher ranking of a multimedia content, the greater the requested probability. The popularity of the $i$-ranked content can be modeled by the Zipf distribution as follow\cite{Zipf,kongtao,scaling}
\begin{equation}
f_{i}={\frac{1/i^{\gamma}}{\sum_{j=1}^{N}1/j^{\gamma}}},  \label{zipf}
\end{equation}%
where $\gamma\geq 0$ reflects the skew of the content popularity distribution. The larger $\gamma$, the fewer of popular contents accounting for the majority of the requests.
\subsection{Cache-enabled Content Access Protocol}\label{subsec:protocol}
 A high-capacity wired backhual solution can be used for the connection link between BSs and relays, e.g., optical fiber. When the network is at low traffic load, e.g., the traffic load in the nighttime, the most popular contents can be cached at the relays and the cache-enabled users via broadcasting \cite{cache-enable}. All the cache-enabled users store the same copy of the contents until the caching storage is fully occupied, and those cached in different relays are also the same.

When a user requests a multimedia content, it first checks whether the caching storage is available in its local devices. If the requested content is cached in its caching storage, the user can obtain the content immediately; otherwise, the user access the ``closest'' node. Here, we define the node providing the maximum received power as the ``closest'' node of a requesting user. The user's received power is defined as \cite{capacity1,capacity2,SINR},
 \begin{equation}
C_{i}=\nu B_iP_ir_i^{-\beta},
 \end{equation}
where $P_i$ for $i=1,2,3$ is the transmit power of the node in the $i$-th tier. When $i=1$ it means the user gets the content from a cache-enabled user via the local sharing link such as D2D \cite{femto2}\cite{grid}\cite{trade-off}\cite{scaling} considered in this paper. $\beta\geq 2$ denotes the path-loss exponent and $r_i$ is the distance between the requesting user and its closest node of the $i$-th tier. $\nu$ denotes a propagation constant and is normalized as $1$ in this paper. For clarity, the association bias $B_{i}$ of the $i$-th tier is assumed to be 1. Thus, the closest node is $\arg\max_{i}C_{i}$. As a result, there are four content access Cases in the three-tier HetNet as described below.

\textbf{Case~1:} The requesting user does not have caching ability, and it can successfully obtain the requested contents from the closest node (BS, relay or the cache-enabled user).

\textbf{Case~2:} The requesting user has caching ability, but the requested content is not cached in its local caching storage, which means all the cache-enabled users do not store the requested content. Thus, only the closest node in the relay or BS tier can respond to the request.

\textbf{Case~3:} The requesting user does not have caching ability, and its closest node is a cache-enabled user. However, the corresponding cache-enabled user does not cache the requested content due to the limited caching storage. So the requesting user needs to obtain the content from the closest node in other tiers, i.e., the relay or BS.

\textbf{Case~4:} The requesting user has caching ability, and it can obtain the requested content from its local caching storage immediately.

The HetNet without caching is considered as a baseline in this paper. In the baseline,
neither users nor relays  have caching ability. Therefore, BSs could not pre-broadcast the popular contents to the users and relays. And the local sharing links (D2D) among users can not work at this time. If the closest node of a user is a relay, the relay needs to fetch the content from the BS via wired backhaul firstly and then forwards it to the user; If the closest node is a BS, the BS responds the user's request. Similarly, in the caching network, if the content is not cached in the relay, a $\emph{Backhaul-needed}$ ($\emph{BH-needed}$) event happens and the relay needs to fetch the content from the BS via the backhaul firstly; otherwise, the $\emph{Backhaul-free}$ ($\emph{BH-free}$) event happens and the relay can respond the request immediately without the backhaul.
\subsection{The Probability of the Tier Association Priority}
As described above, the locations of users, BSs and relays are modeled as mutually independent PPPs. Therefore, the probability that there are $n$ nodes
in area $A$ with radius of $r$ is given by
\begin{equation}
\mathbb{P}\left(\begin{array}{c|}
                     n\,\text{in}\,\psi_i
                   \end{array}~A=\pi r^2\right)=e^{-\pi r^2\lambda_{i}}\frac{(\pi r^2\lambda_{i})^{n}}{n!}\text{,}
\end{equation}
where $n=0,1,2,...$ and $i=0,1,2,3$.
Without loss of generality, according to Slivnyak's theorem \cite{origin}, we conduct analysis on assumption that there is a typical user with or without caching ability at the origin of the Euclidean area, and it is regarded as the reference user.

We first analyze the scenario where the reference user does not have caching ability with the probability of $1-\alpha$. So the probability that the distance between the reference user and its closest cache-enabled user is larger than $r_{1}$ is
\begin{equation}
\mathbb{P}\left(y\geq r_{1}\right)=\mathbb{P}\left(0\,\text{in}\,\psi_1 | r_1\right)=e^{-\pi\lambda_{1}r_{1}^{2}}.
\end{equation}
Therefore, the probability density function (PDF) of the distance from the reference user to the closest cache-enabled user is given by
\begin{equation}
f_{R_1}(r_1)=\frac{\partial\left(1-\mathbb{P}\left(y\geq r_{1}\right)\right)}{\partial r_1}=2\pi\lambda_{1}r_{1}e^{-\pi\lambda_1r_{1}^{2}}.
\end{equation}
Similarly, the PDF of the distance from the reference user to its closest relay and BS are
\begin{equation}
f_{R_i}(r_i)=2\pi\lambda_{i}r_{i}e^{-\pi\lambda_ir_{i}^{2}}, i=2,3,
\end{equation}
respectively. As a result, the joint PDF can be given by
\begin{equation}\label{r1r2r3}
f_{R_1,R_2,R_3}(r_1,r_2,r_3)=\left(\prod_{i=1}^{3}2\pi\lambda_{i}r_{i}\right)e^{-\pi\sum\limits_{i=1}^{3}\lambda_i r_{i}^{2}}.
\end{equation}
To derive the main conclusions in the following, we first consider the general $K$-tier HetNet with PPPs with parameters $\lambda_i$ and $P_i,i=1,2,...,K$. Denote $C_{t_i},i=1,2,...,K,$ as the maximum received-power from
the $t_i$-th tier, where $t_i\in\{1,2,...,K\}$ means that the value of the maximum received-power from the $t_i$-th tier is ranked $i$-th. We thus have the following proposition.
\begin{proposition}\label{ccc}
The probability of $C_{t_1}\!>\!C_{t_2}\!>\!...\!>\!C_{t_K}$ is
\begin{align}\
&\mathbb{P}(C_{t_1}\!>\!C_{t_2}\!>\!...\!>\!C_{t_K})=\prod_{n=1}^{K-1}\left[\sum_{m=n}^K\frac{\lambda_{t_m}}{\lambda_{t_n}}\left(\frac{P_{t_m}}{P_{t_n}}\right)^\frac{2}{\beta}\right]^{-1}.
\end{align}
\end{proposition}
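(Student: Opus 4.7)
The plan is to reduce the event $\{C_{t_1} > C_{t_2} > \cdots > C_{t_K}\}$ to the event that a family of independent exponential random variables appears in a prescribed order, for which there is a well-known closed-form answer.

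First I would change variables in the distance PDF established in the excerpt. Since $f_{R_i}(r_i) = 2\pi\lambda_i r_i \exp(-\pi\lambda_i r_i^2)$, the squared distance $X_i := r_i^2$ is exponential with rate $\pi\lambda_i$, and the $X_i$'s inherit mutual independence from the mutual independence of the underlying PPPs. Rescaling, $W_i := C_i^{-2/\beta} = P_i^{-2/\beta}\,X_i$ is then exponential with rate $\mu_i := \pi\lambda_i P_i^{2/\beta}$. Because the map $C \mapsto C^{-2/\beta}$ is strictly decreasing on $(0,\infty)$, the event of interest is identically $\{W_{t_1} < W_{t_2} < \cdots < W_{t_K}\}$.

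Next I would invoke the classical ordering identity for independent exponentials,
$\mathbb{P}(W_{t_1} < W_{t_2} < \cdots < W_{t_K}) = \prod_{n=1}^{K-1} \mu_{t_n}\big/\sum_{m=n}^{K}\mu_{t_m}$, proved by induction on $K$: the probability that $W_{t_1}$ equals $\min_m W_{t_m}$ is $\mu_{t_1}/\sum_m \mu_{t_m}$, and by the memoryless property the residuals $W_{t_n}-W_{t_1}$ for $n\geq 2$ are again independent exponentials with the original rates $\mu_{t_n}$, reducing the problem to size $K-1$.

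Finally, dividing numerator and denominator of each factor by $\mu_{t_n}$ and using $\mu_{t_m}/\mu_{t_n} = (\lambda_{t_m}/\lambda_{t_n})(P_{t_m}/P_{t_n})^{2/\beta}$ produces the claimed product. There is no real obstacle in the argument; the only subtlety worth flagging is spotting that the substitution $W_i = C_i^{-2/\beta}$ simultaneously absorbs the path-loss exponent and the transmit power into a single exponential rate, after which the statement is a direct consequence of standard order statistics for independent exponentials.
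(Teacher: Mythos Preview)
Your argument is correct. The substitution $W_i=C_i^{-2/\beta}=P_i^{-2/\beta}r_i^2$ indeed gives independent exponentials with rates $\mu_i=\pi\lambda_i P_i^{2/\beta}$, and the classical ordering identity $\mathbb{P}(W_{t_1}<\cdots<W_{t_K})=\prod_{n=1}^{K-1}\mu_{t_n}\big/\sum_{m=n}^{K}\mu_{t_m}$ yields the stated product after dividing through by $\mu_{t_n}$.

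The paper, however, proceeds by brute force rather than by recognising this exponential structure: it writes down the joint density $f_{R_1,\ldots,R_K}(r_1,\ldots,r_K)=\big(\prod_i 2\pi\lambda_i r_i\big)\exp\big(-\pi\sum_i\lambda_i r_i^2\big)$, identifies the region $\Omega=\{\,(P_{t_{n+1}}/P_{t_n})^{1/\beta}r_{t_n}<r_{t_{n+1}}<\infty,\ n=1,\ldots,K-1\,\}$ corresponding to the ordering of the $C_i$, and evaluates the $K$-fold integral iteratively from the innermost variable outward. Each successive Gaussian-type integral peels off one factor of the product. Your route is more conceptual: the change of variables collapses the geometry into a pure race among exponentials, after which the memoryless induction does all the work and no explicit integration is needed. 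The paper's computation is entirely self-contained but hides the reason the answer has such a clean telescoping form; your argument explains that form and would extend immediately to related quantities (e.g.\ the distribution of which tier is ranked $k$th) without redoing any integrals.
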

\begin{proof}
See Appendix A.
\end{proof}

For the three-tier HetNet of this paper, the probability of $C_{i}>C_{j}>C_{k}, i\neq j \neq k\in\{1,2,3\}$ is then given by
\begin{align}
&\mathrm{\mathbb{P}(C_{i}>C_{j}>C_{k})}={\left[1+\frac{\lambda_k}{\lambda_j}\left(\frac{P_k}{P_j}\right)^\frac{2}{\beta}\right]^{-1}\left[\sum\limits_{j=1}^3\frac{\lambda_j}{\lambda_i}\left(\frac{P_j}{P_i}\right)^{\frac{2}{\beta}}\right]^{-1}}.\label{probability}
\end{align}	
We observe from (\ref{probability}) that the reference user without caching ability prefers to obtain the content from the $i$-th, $j$-th, and $k$-th tier in turn with probability of $\mathrm{\mathbb{P}(C_{i}>C_{j}>C_{k})}$.  Proposition \ref{ccc} can be further extended to the following lemma.
\begin{lemma}\label{ddd}
The probability that the reference user without caching ability prefers to associate with the $i$-th tier at first in the $K$-tier HetNet is
\begin{align}
\mathcal{G}_{K,i}\triangleq \mathbb{P}(C_{i}\!>\!\max_{\forall n\neq i}C_{n})=\left[\sum_{m=1}^K\frac{\lambda_{m}}{\lambda_{i}}\left(\frac{P_{m}}{P_{i}}\right)^\frac{2}{\beta}\right]^{-1}.
\end{align}
\end{lemma}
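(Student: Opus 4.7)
The plan is to derive $\mathcal{G}_{K,i}$ by marginalizing Proposition~\ref{ccc} over all orderings of the remaining $K-1$ tiers, since the phrasing ``can be further extended'' suggests this as the intended route. The event $\{C_i>\max_{n\neq i}C_n\}$ is (almost surely) the disjoint union, over all $(K-1)!$ permutations $(t_2,\dots,t_K)$ of $\{1,\dots,K\}\setminus\{i\}$, of the strict-order events $\{C_i>C_{t_2}>\cdots>C_{t_K}\}$. Applying Proposition~\ref{ccc} with $t_1=i$, the $n=1$ factor in the product evaluates to $\bigl[\sum_{m=1}^K(\lambda_m/\lambda_i)(P_m/P_i)^{2/\beta}\bigr]^{-1}$ regardless of which permutation of the remaining tiers is chosen, so it factors out of the sum. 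What is left is a sum of products that is precisely Proposition~\ref{ccc} applied to the $(K-1)$-tier sub-network on $\{1,\dots,K\}\setminus\{i\}$, taken over all $(K-1)!$ orderings, and therefore equals $1$ by total probability. Multiplying the factored-out term by $1$ gives the claimed expression.

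A simpler self-contained derivation proceeds from the joint PDF~(\ref{r1r2r3}) and will serve as a cross-check. Conditioning on the distance $r_i$ to the closest $i$-th-tier node and using the PPP void probability on an independent tier $n\neq i$,
\begin{equation*}
\mathbb{P}(C_n<C_i\mid r_i)=\mathbb{P}\bigl(r_n>(P_n/P_i)^{1/\beta}r_i\bigr)=\exp\!\bigl(-\pi\lambda_n(P_n/P_i)^{2/\beta}r_i^2\bigr).
\end{equation*}
Independence across tiers collapses the product over $n\neq i$ into a single exponential. Combining with $f_{R_i}(r_i)=2\pi\lambda_i r_i e^{-\pi\lambda_i r_i^2}$, and noting that $\lambda_i=\lambda_i(P_i/P_i)^{2/\beta}$ lets the $n=i$ term be absorbed to produce a clean sum over $m=1,\dots,K$ in the exponent, the integrand becomes $2\pi\lambda_i r_i e^{-\pi a r_i^2}$ with $a=\sum_{m=1}^K\lambda_m(P_m/P_i)^{2/\beta}$. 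Integration yields $\lambda_i/a$, which is exactly $\mathcal{G}_{K,i}$.

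The main obstacle is purely bookkeeping. In the marginalization route one must verify that the $n=1$ factor of the Proposition~\ref{ccc} product depends only on $t_1=i$ and not on the permutation of the tail, and that the remaining $K-2$ factors form exactly the Proposition~\ref{ccc} probability for the sub-network---both are immediate from the index structure. In the direct route, the only subtlety is to remember to re-introduce the $n=i$ term when recombining the product into a single sum inside the exponent. Either approach produces the closed form stated in the lemma.
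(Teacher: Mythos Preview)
Your proposal is correct. Your cross-check derivation---condition on $r_i$, multiply the void probabilities for the independent tiers $n\neq i$, then integrate against $f_{R_i}$---is precisely the paper's proof in Appendix~B, which integrates the joint PDF over the region $\Omega=\{0<r_i<\infty,\ (P_n/P_i)^{1/\beta}r_i<r_n<\infty\ \forall n\neq i\}$ and arrives at the same closed form.

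Your primary route, summing Proposition~\ref{ccc} over the $(K-1)!$ orderings of the remaining tiers, is a genuinely different argument that the paper does \emph{not} carry out (despite the phrase ``can be further extended,'' the actual appendix proof does not invoke Proposition~\ref{ccc} at all). Your observation that the $n=1$ factor depends only on the \emph{set} $\{t_1,\dots,t_K\}$ and hence factors out, and that the residual product is exactly the Proposition~\ref{ccc} probability for the $(K-1)$-tier sub-network on $\{1,\dots,K\}\setminus\{i\}$, is correct; summing it over all orderings gives $1$ by total probability since ties occur with probability zero. This route is slightly more elegant in that it reuses Proposition~\ref{ccc} as a black box and makes the ``extension'' literal, whereas the paper's direct integration is self-contained and avoids the combinatorial bookkeeping. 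Either way the result is immediate.
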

\begin{proof}
See Appendix B.
\end{proof}

So in the three-tier HetNet, the probability that the reference user without caching prefers to get the content from the $i$-th tier at first is
\begin{equation}\label{prefer}
    \mathcal{G}_{3,i}=\left[{\sum\limits_{j=1}^3\frac{\lambda_j}{\lambda_i}\left(\frac{P_j}{P_i}\right)^{\frac{2}{\beta}}}\right]^{-1}.
\end{equation}

Likewise, as to the scenario where the reference user is cache-enabled, the probability of $C_{i}>C_{j}, i\neq j\in\{2,3\}$ is
\begin{align}
 \mathrm{\mathbb{P}(C_{i}>C_{j})}\!=\left[{\sum\limits_{j=2}^3\frac{\lambda_j}{\lambda_i}\left(\frac{P_j}{P_i}\right)^{\frac{2}{\beta}}}\right]^{-1}.\label{probability_2}
\end{align}	
Equation (\ref{probability_2}) means that the reference user with caching ability prefers to obtain the content from the $i$-th, $j$-th tier in turn
with probability of $\mathrm{\mathbb{P}(C_{i}>C_{j})}$ when the requested content has not been cached in the local caching. $\mathrm{\mathbb{P}(C_{i}>C_{j}>C_{k})}$ and $\mathrm{\mathbb{P}(C_{i}>C_{j})}$ will be denoted as $\mathrm{\mathbb{P}_{i,j,k}}$ and $\mathrm{\mathbb{P}_{i,j}}$ respectively for convenience in the following. According to  (\ref{prefer}) and (\ref{probability_2}), we find that the tier association priorities are different when the user is cache-enabled or not. Users prefer to connect to the tier with higher transmit power and node density.
\subsection{The Density of the Active D2D Transmitters}\label{sec:system_4}
In the subsection above, we have analyzed the tier association priority merely based on the geographical locations, where the detailed impacts of the limited caching space and the content popularity are not considered. Define $C\in\{\text{Case 1, Case 2, Case 3, Case 4}\}$ as the Case the user may be active in. Let $T\in\{\text{Tier 1, Tier 2, Tier 3, Local}\}$ be the node where the user can obtain contents. Let $W\in\{\emph{BH-needed}, \emph{BH-free}\}$ describe whether the backhaul is needed for a user to access the content successfully. We only consider the wired backhaul between the BS and the relay, the impacts of the backhaul from the multimedia server to the BS are out of the scope of this paper. Denote $\mathbf{\chi}=(C,T,W)$ as the state of the user. Probabilities of different $\mathbf{\chi}$ are listed in Table \ref{table:pushing}, where we rewrite $\sum_{i=a}^bf_i$ as $F(a,b)$ for simplification. Assign the value in the $i$-th $i\in\{2,3,...,9\}$ row $j$-th $j\in\{3,4,...,6\}$  column of Table \ref{table:pushing} to the element $D_{i-1,j-2}$ of a matrix $\mathbf{D}_{8\times4}$.
\begin{table}
\caption{Probabilities of that the user is active in different states.}
\vspace{-0.8cm}
\label{table:pushing}
\begin{center}
\begin{tabular}{|c|c|c|c|c|c|}
\hline
\multicolumn{2}{|c|}{$\mathbb{P}[\chi=(C,T,W)]$} &\multicolumn{1}{|c|}{\textbf{Tier 1 (D2D)}} &\multicolumn{1}{|c|}{\textbf{Tier 2 (Relay)}}&\multicolumn{1}{|c|}{\textbf{Tier 3 (BS)}}&\!\!\!\textbf{Local}\!\!\! \\
\hline
\multirow{2}{*}{\textbf{\!\!Case 1\!\!}}&\!\!$\emph{{BH-free}}$\!\!&$\!\!\mathcal{G}_{3,1}(1\!-\!\alpha)F(1,M_1)\!\!$&$\mathcal{G}_{3,2}(1\!-\!\alpha)F(1,M_2)$&$\mathcal{G}_{3,3}(1\!-\!\alpha)$&$0$\\
\cline{2-6}
&\!\!$\emph{{BH-needed}}$\!\!&$0$&$\mathcal{G}_{3,2}(1\!-\!\alpha)F(M_2\!+\!1,N)$&0&$0$\\
\hline
\multirow{2}{*}{\textbf{\!\!Case 2\!\!}}&$\emph{{BH-free}}$&$0$&$\mathbb{P}_{2,3}\alpha F(M_1\!+\!1,M_2)$&$\mathbb{P}_{3,2}\alpha F(M_1\!+\!1,N)$&$0$\\
\cline{2-6}
&\!\!$\emph{{BH-needed}}$\!\!&$0$&$\mathbb{P}_{2,3}\alpha F(M_2\!+\!1,N)$&0&$0$\\
\hline
\multirow{2}{*}{\textbf{\!\!Case 3\!\!}}&$\emph{{BH-free}}$&$0$&\!\!$\mathbb{P}_{1,2,3}(1\!-\!\alpha) F(M_1\!+\!1,M_2)$\!\!&\!\!$\mathbb{P}_{1,3,2}(1\!-\!\alpha)F(M_1\!+\!1,N)\!\!$&$0$\\
\cline{2-6}
&\!\!$\emph{{BH-needed}}$\!\!&$0$&$\mathbb{P}_{1,2,3}(1\!-\!\alpha)F(M_2\!+\!1,N)$&0&$0$\\
\hline
\multirow{2}{*}{\textbf{\!\!Case 4\!\!}}&$\emph{{BH-free}}$&$0$&$0$&0&$\!\!\alpha F(1,M_1)\!\!$\\
\cline{2-6}
&\!\!$\emph{{BH-needed}}$\!\!&$0$&0&0&$0$\\
\hline
\end{tabular}
\end{center}
  \vspace{-0.7cm}
\end{table}

Based on Table \ref{table:pushing}, the probability that a user obtains the content successfully via the D2D link is $\mathcal{G}_{3,1}(1\!-\!\alpha)F(1,M_1)$, i.e., $D_{1,1}$. So the density of  users to be served by D2D transmitters (TXs) is $\lambda_0\mathcal{G}_{3,1}(1\!-\!\alpha)F(1,M_1)$. However, the density of the cache-enabled user is $\lambda_0\alpha$, which is the maximum density of D2D TXs. Define $\lambda'_1$ as the density of the actually active D2D TXs. If a small fraction of users have caching ability, in the coverage of a cache-enabled user, there is at least one user to be responded via the D2D link, i.e., the density $\lambda'_1$ is $\alpha\lambda_0$. At this time the number of D2D links are limited by the number of cache-enabled users. All of cache-enabled users should be active as D2D TXs to satisfy the demand for the D2D link. However, if most of users are cache-enabled, some cache-enabled users may not cover any user in the corresponding coverage. The density of cache-enabled users active as  D2D TXs is $\lambda'_1=(1-\alpha)\lambda_0\mathcal{G}_{3,1}F(1,M_1)$, which is smaller than $\alpha\lambda_0$. At this time the number of D2D links are limited by the number of the users without caching ability, and not all of cache-enabled users are active as D2D TXs. Thus, the node density of the active D2D TXs can be given by
\begin{equation}\label{lambda1}
\lambda'_1=\text{min}\left\{\alpha\lambda_0,(1-\alpha)\lambda_0\mathcal{G}_{3,1}F(1,M_1)\right\}\text{.}
\end{equation}

We define ${\alpha}^*$ as the critical point deciding whether all of cache-enabled users need to be active as D2D TXs. Let $\alpha\lambda_0=(1-\alpha)\lambda_0\mathcal{G}_{3,1}\sum_{i=1}^{M_1}f_i$ we get the critical point,
\begin{equation}\label{alpha}
\alpha^*=\text{max}\left\{0, \left[{F(1,M_1)-h}\right]\left[{1+F(1,M_1)}\right]^{-1}\right\},
\end{equation}
where $h=\sum_{j=2}^3\frac{\lambda_j}{\lambda_0}(\frac{P_j}{P_1})^{\frac{2}{\beta}}$. From (\ref{alpha}) we observe that whether all the cache-enabled user need to be active as D2D TXs is jointly decided by the user caching ability ($M_1$), the content popularity ($\gamma$), the transmit power ($P_i$), the node density ($\lambda_i$) and the path-loss exponent ($\beta$). $\alpha^*$ increases with the increase of $M_1,\gamma$ and $\beta$. Then equation (\ref{lambda1}) can be rewritten as
\begin{equation}\label{ui_zst}
\lambda'_1 = \left\{
\begin{array}{ll}
\alpha\lambda_0,&\alpha<\alpha^*;\\
(1-\alpha)\lambda_0\mathcal{G}_{3,1}F(1,M_1),&\alpha\geq\alpha^*.
\end{array}
 \right.
\end{equation}

Next, we introduce another variable $\widehat\alpha$ which decides the maximum density of the D2D TXs in the network with various $\alpha$. Based on the first derivative of $(1-\alpha)\lambda_0\mathcal{G}_{3,1}F(1,M_1)$ with respect to $\alpha$ we can get $\widehat{\alpha}=\sqrt{h^2+h}-h$. The density of the D2D TXs increases with $\alpha$ in the region $[0,\widehat{\alpha}]$ and starts to decrease from $\widehat{\alpha}$. It implies at most $\widehat{\alpha}$ D2D links can be set up in unit area.

For the other two tiers, as considered above that  all the relays and BSs are fully loaded and active when $\lambda_0\gg\lambda_2>\lambda_3$, the actually active node density of the BSs and relays equal to the corresponding node density, i.e., $\lambda_{i}^{'}=\lambda_{i}$ for $i=2,3$. Therefore, the nodes of the actually active D2D TXs, relays and BSs are scattered according to mutually  independent homogeneous PPPs $\Phi_i, i=1,2,3$ with the density $\lambda'_i$, respectively.
\section{The Average Ergodic Rate}\label{sec:rate}
The average ergodic rate in the downlink is analyzed in this section. Specifically, the communication link between the relay/user and the requesting user is assumed to share the same frequency with that from the BS to the users, yielding the interference. There exist two types of interferences, namely, the inter-tier and the intra-tier interference. The full load state of the BS and relay is considered and user requests arriving at the same service node are responded one after the other in a round-robin manner  \cite{capacity1}. We shall note that the rate analyzed in this Section refers to that over the air, and the effect of the backhaul will be considered in Section \ref{sec:throughput}.

Therefore, the signal-to-interference-plus-noise ratio (SINR) of the reference user associated with the node in the $i$-th tier is
\begin{align}\label{sinr}
    \text{SINR}_i(x)=\frac{P_ig_{i,0}x^{-\beta}}{\sum\limits_{j=1}^3\sum\limits_{k\in\Phi_j\setminus B_{i,0}}P_jh_{jk}|Y_{jk}|^{-\beta}+\sigma^2}\triangleq\frac{P_ig_{i,0}x^{-\beta}}{\sum\limits_{j=1}^3I_j+\sigma^2}\triangleq\frac{P_ig_{i,0}x^{-\beta}}{I_r},
\end{align}
where $\sigma^2$ denotes the power of the additive noise, $x$ is the distance between the reference user and its serving node. $g_{k,0}$ and $h_{jk}$ denote the channel power gain. Here, we consider Rayleigh fading
channels with average unit power, yielding $g_{k,0}\sim \exp(1),h_{jk}\sim \exp(1)$. $|Y_{jk}|$ is the distance between the reference user and its interfering nodes $k$ in the $j$-th tier. $I_{j}$ denotes the cumulative interference from the $j$-th tier. We define the average ergodic rate $\mathcal{U}_i,i=1,2,3$ of the reference user when it communicates with the $i$-th tier as \cite{capacity1,ppp2,ppp3,ppp1,SINR},
\begin{equation}\label{rate2}
    \mathcal{U}_i\triangleq\mathbb{E}_x[\mathbb{E}_{\text{SINR}_i}[\text{ln}(1+\text{SINR}_i(x))]].
\end{equation}
Here, the unit of the average rate is nats/s/Hz ($1 ~\text{nat}=1.443~\text{bits}$) to simplify the analysis.
The average is taken over both the channel fading distribution and the spatial PPP. The ergodic rate is first averaged on condition that the reference user is at a distance $x$ from its serving node in the $i$-th tier. Then the rate is averaged via calculating the expectation over the distance $x$. The metric means the average ergodic rate of a randomly chosen user associated to the $i$-th tier.
\subsection{The Average Ergodic Rate in  Case 1}
Denote $X_i$ as the distance between the reference user and its serving node of tier $i$. Based on the proof in \cite{SINR}, we can obtain the PDF of $X_i$ as follow,
\begin{equation}\label{fx}
   f_{X_i}(x)=\frac{2\pi \lambda_i}{\mathcal{G}_{3,i}}xe^{-\pi \sum\limits_{j=1}^3\lambda_j \left(\frac{P_j}{P_i}\right)^{\frac{2}{\beta}}x^2}.
\end{equation}
Then we have the following theorem.
\begin{theorem}\label{rateltheorem}
The average ergodic rate   of the reference user associated with the $i$-th tier ($i=1,2,3$) in Case 1 is
\begin{align}\label{ui_z}
\mathcal{U}_{1,i}=\frac{2\pi \lambda_i}{\mathcal{G}_{3,i}}\int_0^\infty\!\!\!\int_0^\infty x \text{exp}\left\{-x^\beta P_i^{-1}(e^t-1)\sigma^2\!-\!\frac{\pi \lambda_ix^2}{\mathcal{G}_{3,i}}\left[1\!+\!\frac{\lambda_1+(\lambda'_1\!-\!\lambda_1)\mathcal{G}_{3,1}}{\lambda_1\mathcal{Z}_1^{-1}(e^t\!-\!1)}\right]\right\}\mathrm{d}{t}\mathrm{d}{x}.
\end{align}
\end{theorem}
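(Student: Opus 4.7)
The plan is to begin from the definition (\ref{rate2}) and apply the ``rate--CCDF'' identity
\begin{equation*}
\mathbb{E}[\ln(1+Y)]=\int_{0}^{\infty}\mathbb{P}(Y>e^{t}-1)\,\mathrm{d}t
\end{equation*}
to $Y=\mathrm{SINR}_{i}(x)$ after conditioning on the serving distance $X_{i}=x$. Because $g_{i,0}\sim\exp(1)$ is independent of the aggregate interference $I=\sum_{j=1}^{3}I_{j}$, the conditional CCDF collapses to
\begin{equation*}
\mathbb{P}\!\left(\mathrm{SINR}_{i}(x)>e^{t}-1\right)=\exp\!\left(-x^{\beta}P_{i}^{-1}(e^{t}-1)\sigma^{2}\right)\mathcal{L}_{I}\!\left(x^{\beta}P_{i}^{-1}(e^{t}-1)\right),
\end{equation*}
so multiplying by the PDF $f_{X_{i}}(x)$ from (\ref{fx}) and integrating over $x$ and $t$ already produces the outer shell of (\ref{ui_z}). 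What remains is to compute $\mathcal{L}_{I}$.

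Next, I would exploit the independence of the PPPs $\Phi_{1},\Phi_{2},\Phi_{3}$ to factor $\mathcal{L}_{I}(s)=\prod_{j=1}^{3}\mathcal{L}_{I_{j}}(s)$. Under Case~1 the serving node yields the largest received power, so every tier-$j$ interferer lies outside a ball of radius $(P_{j}/P_{i})^{1/\beta}x$ around the reference user. Combining the PPP probability generating functional with the Rayleigh average $\mathbb{E}_{h}[e^{-sP_{j}hr^{-\beta}}]=(1+sP_{j}r^{-\beta})^{-1}$, each factor becomes $\exp\bigl(-\pi\lambda_{j}(P_{j}/P_{i})^{2/\beta}x^{2}\,\mathcal{Z}_{1}(e^{t}-1)\bigr)$ after the standard change of variables $u=r^{2}/x^{2}$, with $\mathcal{Z}_{1}$ denoting the usual fractional-moment integral.

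The delicate step is the D2D tier $j=1$. Interferers are only the \emph{actively transmitting} D2D nodes, whose density $\lambda'_{1}$ is smaller than the cache-enabled-user density $\lambda_{1}$ (Section~\ref{sec:system_4}). The exclusion ball of radius $(P_{1}/P_{i})^{1/\beta}x$, however, is inherited from the closest-node argument applied to the full cache-enabled process of density $\lambda_{1}$. I will therefore decompose the D2D contribution into two independent pieces: the mandatory exclusion attached to the full process at density $\lambda_{1}$, plus the complementary process of \emph{inactive} cache-enabled users which, conditional on Case~1 association to tier $i$, must additionally be removed from the interferer set only with the probability $\mathcal{G}_{3,1}$ that a cache-enabled user would have won the closest-node competition. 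This bookkeeping is the main obstacle; pushing it through yields the effective tier-$1$ density $\lambda_{1}+(\lambda'_{1}-\lambda_{1})\mathcal{G}_{3,1}$ that appears in (\ref{ui_z}).

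Finally I would sum the three tier-wise exponents, invoke Lemma~\ref{ddd} in the form $\sum_{j=1}^{3}\lambda_{j}(P_{j}/P_{i})^{2/\beta}=\lambda_{i}/\mathcal{G}_{3,i}$ to absorb the coefficient coming from $f_{X_{i}}(x)$, and factor the coefficient of $x^{2}$ as $(\pi\lambda_{i}/\mathcal{G}_{3,i})\,[1+\cdots]$, where the leading ``$1$'' comes from the density $f_{X_{i}}$ and the remaining fraction encodes the interference through $\mathcal{Z}_{1}(e^{t}-1)$ together with the effective tier-$1$ density derived above. Keeping the noise term $x^{\beta}P_{i}^{-1}(e^{t}-1)\sigma^{2}$ outside the bracket then reproduces the right-hand side of (\ref{ui_z}) exactly.
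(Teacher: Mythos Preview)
Your skeleton is right: the rate--CCDF identity, the factorisation $\mathcal{L}_{I}=\prod_{j}\mathcal{L}_{I_{j}}$, the PGFL evaluation with exclusion radii $x(P_{j}/P_{i})^{1/\beta}$, and the final use of $\sum_{j}\lambda_{j}(P_{j}/P_{i})^{2/\beta}=\lambda_{i}/\mathcal{G}_{3,i}$ are exactly what the paper does. The gap is in your treatment of the D2D tier.

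You interpret the numerator $\lambda_{1}+(\lambda'_{1}-\lambda_{1})\mathcal{G}_{3,1}$ as an ``effective tier-$1$ interference density'' coming from a probabilistic decomposition in which inactive cache-enabled users are removed ``only with probability $\mathcal{G}_{3,1}$''. That story has no basis: the thinning from $\lambda_{1}$ to $\lambda'_{1}$ is global, not conditional on winning the association. More concretely, if you actually feed the density $\lambda_{1}+(\lambda'_{1}-\lambda_{1})\mathcal{G}_{3,1}$ into the tier-$1$ PGFL factor, you still have to multiply by $(P_{1}/P_{i})^{2/\beta}=\lambda_{i}\mathcal{G}_{3,1}/(\lambda_{1}\mathcal{G}_{3,i})$ when you pull out the common prefactor $\pi\lambda_{i}/\mathcal{G}_{3,i}$. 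You would then get $(\lambda'_{1}-\lambda_{1})\mathcal{G}_{3,1}^{2}/\lambda_{1}$ inside the bracket, not $(\lambda'_{1}-\lambda_{1})\mathcal{G}_{3,1}/\lambda_{1}$, and (\ref{ui_z}) would \emph{not} come out.

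The paper's route is simpler and is what you should do: for every tier take the \emph{active} density $\lambda'_{j}$ in the PGFL (with the same exclusion ball), so that the tier-$j$ factor is $\exp\bigl(-\pi\lambda'_{j}(P_{j}/P_{i})^{2/\beta}x^{2}\mathcal{Z}_{1}\bigr)$. Combining with the exponent of $f_{X_{i}}$ gives
\[
-\pi x^{2}\sum_{j=1}^{3}\bigl[\lambda_{j}+\lambda'_{j}\,\mathcal{Z}_{1}(e^{t}-1)\bigr]\Bigl(\tfrac{P_{j}}{P_{i}}\Bigr)^{2/\beta}.
\]
Now use $\lambda'_{2}=\lambda_{2}$, $\lambda'_{3}=\lambda_{3}$, write $\lambda'_{1}=\lambda_{1}+(\lambda'_{1}-\lambda_{1})$, apply $\sum_{j}\lambda_{j}(P_{j}/P_{i})^{2/\beta}=\lambda_{i}/\mathcal{G}_{3,i}$ twice, and substitute $(P_{1}/P_{i})^{2/\beta}=\lambda_{i}\mathcal{G}_{3,1}/(\lambda_{1}\mathcal{G}_{3,i})$ for the leftover term. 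The combination $\lambda_{1}+(\lambda'_{1}-\lambda_{1})\mathcal{G}_{3,1}$ then appears purely as an algebraic artefact of this factorisation, not as a physical density.
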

\begin{proof}
See Appendix \ref{sec:prooftheroem1}.
\end{proof}

Since the node densities are typically quite high in the HetNet, the background noise is far smaller than the interference power. The interference is dominant and the noise can often be neglected,
i.e. $(\sigma^2\rightarrow0)$, then the rate is further simplified to
\begin{align}\label{ui_zs1}
\mathcal{U}_{1,i}=\mathlarger{\int}_0^\infty\frac{1}{1+\frac{\lambda_1+(\lambda'_1-\lambda_1)\mathcal{G}_{3,1}}{\lambda_1\mathcal{Z}_1^{-1}(e^t\!-\!1)}}\mathrm{d}{t}.
\end{align}
According to (\ref{ui_zst}), equation (\ref{ui_zs1}) can be further rewritten as
\begin{equation}\label{ui_zs}
\mathcal{U}_{1,i} = \left\{
\begin{array}{ll}
\mathlarger{\mathlarger{\int}}_0^\infty\frac{1}{1+\mathcal{Z}_1(e^t\!-\!1)}\mathrm{d}{t},&\alpha<\alpha^*;\\
{\mathlarger{\mathlarger{\int}}}_0^\infty\left[1+\left(1+\frac{1-\alpha}{\alpha}\mathcal{G}_{3,1}^2\sum\limits_{i=1}^{M_1}f_i-\mathcal{G}_{3,1}\right)\mathcal{Z}_1(e^t\!-\!1)\right]^{-1}\mathrm{d}{t},&\alpha\geq\alpha^*.
\end{array}
 \right.
\end{equation}

Equations (\ref{ui_zs1}) and (\ref{ui_zs}) reveal that when the interference is dominant, the average ergodic rate in Case 1 is independent on which tier the user connects to. Moreover, from (\ref{ui_zs}) we know that, when the interference is dominant and
the fraction of the cache-enabled users is small~$(\text{i.e.},\alpha<\alpha^*)$, the average ergodic  rate keeps constant. On one hand, the rate  keeps constant when $\alpha$ varies in the region of $[0, \alpha^*]$ such that all of the cache-enabled users should be active as D2D TXs. Higher density of D2D TXs gets the content closer to users while adding additional interference caused by the increase of the D2D pairs. On the other hand, the rate keeps constant independently on system parameters such as the transmit power $P_i$ and node density. This means that raising the transmit power or service node densities increases the desired signal power and the interference by the same amount, and they offset each other. However, the parameters affect the number of simultaneously active nodes in unit area. As an example, with larger $\alpha$ and caching ability $M_1$, more users can get contents via the D2D link or immediately from their local caching,  yielding the change of the sum rate of the cache-enable network. However, when $\alpha>\alpha^*$, the average ergodic  rate increases with the increase of $\alpha$ based on (\ref{ui_zs}). It is because the distance between the user and the D2D TX is reduced with larger number of cache-enabled users, but not all the cache-enabled users need to be active as the D2D TXs at this time, breaking the balance between the desired signal power and the interference.
\subsection{The Average Ergodic Rate in  Case 2}
Similar to (\ref{fx}), the PDF of the distance between the reference user and its serving node of tier $i$ in Case 2 is
\begin{equation}\label{fx2}
   \widehat{f}_{X_i}(x)=\frac{2\pi \lambda_i}{\mathbb{P}_{i,j}}xe^{-\pi \sum\limits_{j=2}^3\lambda_j \left(\frac{P_j}{P_i}\right)^{\frac{2}{\beta}}x^2}, i, j\in \{2,3\}, i\neq j.
\end{equation}
We then calculate the average ergodic rate for Case 2 as follow.
\begin{theorem}\label{rate2theorem}
The average ergodic rate   of the reference user associated with the $i$-th tier ($i=2,3$) in Case 2 is
\begin{align}\label{ui_z_2}
&\mathcal{U}_{2,i}{\setlength\arraycolsep{0.5pt}=}\frac{2\pi  \lambda_i}{\mathbb{P}_{i,j}}\!\!\int_0^\infty\!\!\!\int_0^\infty\!\!x\text{exp}\left\{{\setlength\arraycolsep{0.5pt}-}x^\beta P_i^{{\setlength\arraycolsep{0.5pt}-}1}(e^t{\setlength\arraycolsep{0.5pt}-}1)\sigma^2{\setlength\arraycolsep{0.5pt}-}\frac{\pi \lambda_ix^2}{\mathbb{P}_{i,j}} \left[1{\setlength\arraycolsep{0.5pt}+}\mathcal{Z}_1(e^t{\setlength\arraycolsep{0.5pt}-}1){\setlength\arraycolsep{0.5pt}+}\frac{\lambda'_1}{\lambda_1}\frac{\mathcal{G}_{3,1}}{1{\setlength\arraycolsep{0.5pt}-}\mathcal{G}_{3,1}}\mathcal{Z}_2(a)\right]\right\}\mathrm{d}{t}\mathrm{d}{x}.
\end{align}
\end{theorem}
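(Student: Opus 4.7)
The plan is to mirror the structure of the proof of Theorem \ref{rateltheorem}, but with the Case-2 association rule (the reference user chooses the stronger of the tier-2 and tier-3 serving nodes and ignores tier 1) baked into both the serving-distance distribution and the exclusion regions of the interference Laplace transforms. The first step is to apply the identity $\mathbb{E}[\ln(1+Y)]=\int_{0}^{\infty}\mathbb{P}(Y>e^{t}-1)\,\mathrm{d}t$ to $Y=\text{SINR}_{i}(x)$ and, using $g_{i,0}\sim\exp(1)$ together with (\ref{sinr}), to rewrite the conditional complementary CDF as the joint Laplace transform of $I_{r}$ evaluated at $s=x^{\beta}(e^{t}-1)/P_{i}$:
\begin{equation*}
\mathbb{P}\bigl(\text{SINR}_{i}(x)>e^{t}-1\bigr)=e^{-s\sigma^{2}}\prod_{j=1}^{3}\mathcal{L}_{I_{j}}(s),
\end{equation*}
where independence of the PPPs $\Phi_{1},\Phi_{2},\Phi_{3}$ justifies the factorisation.

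Next, I would evaluate each $\mathcal{L}_{I_{j}}(s)$ via the PGFL of the corresponding PPP, paying careful attention to the exclusion region enforced by the \textbf{closest-node} association rule in Case 2. For $j=i$ (intra-tier interference) every other active tier-$i$ node lies beyond distance $x$. For $j\in\{2,3\}\setminus\{i\}$ the closest node of tier $j$ must lie beyond $x(P_{j}/P_{i})^{1/\beta}$, since otherwise the reference user, which in Case 2 compares only tiers 2 and 3, would have associated with tier $j$. These two contributions together produce the term $\mathcal{Z}_{1}(e^{t}-1)$ inside the bracket of (\ref{ui_z_2}), exactly as in the derivation used for Theorem \ref{rateltheorem}. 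The key point is that \textbf{no} exclusion is imposed on tier 1, because the Case-2 user never contemplates associating with a D2D TX; the active D2D TXs thus form a PPP of density $\lambda'_{1}$ on all of $\mathbb{R}^{2}$, giving the third summand in the bracket.

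The main obstacle will be the tier-1 term. In Theorem \ref{rateltheorem} the D2D Laplace transform carried the factor $\mathcal{G}_{3,1}$ coming from the event that the reference user was associated with tier 1; removing that conditioning in Case 2 changes the normalisation and produces the combination $\frac{\lambda'_{1}}{\lambda_{1}}\frac{\mathcal{G}_{3,1}}{1-\mathcal{G}_{3,1}}$ multiplying the geometric factor $\mathcal{Z}_{2}(a)$. Obtaining this ratio cleanly requires (i) applying the thinning from $\lambda_{1}$ to $\lambda'_{1}$ according to (\ref{ui_zst}) inside the PGFL rather than outside, so that both the small-$\alpha$ and large-$\alpha$ regimes are captured by a single expression, and (ii) replacing the $\mathcal{G}_{3,1}$-based exclusion radius from the proof of Theorem \ref{rateltheorem} with the Case-2 exclusion-free integration, while keeping track of how the original proof's $\mathcal{Z}_{1}$ must be adapted into $\mathcal{Z}_{2}$ evaluated at the appropriate argument $a$. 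Concretely, I would isolate the tier-1 PGFL, perform the change of variable $u=\pi\lambda'_{1}|Y|^{2}$, and compare the resulting Gauss hypergeometric-type integral against $\mathcal{Z}_{1},\mathcal{Z}_{2}$ as defined in the antecedent proof.

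Finally, I would multiply the three Laplace transforms with $e^{-s\sigma^{2}}$, combine the resulting $x^{2}$-coefficient with the Gaussian factor coming from the serving-distance PDF $\widehat{f}_{X_{i}}(x)$ in (\ref{fx2}), and integrate over $x\in(0,\infty)$ and $t\in(0,\infty)$. The prefactor $2\pi\lambda_{i}/\mathbb{P}_{i,j}$ is produced directly by $\widehat{f}_{X_{i}}$, while the $-\pi\lambda_{i}x^{2}/\mathbb{P}_{i,j}$ piece of its exponent absorbs both the intrinsic $-\pi\sum_{j=2}^{3}\lambda_{j}(P_{j}/P_{i})^{2/\beta}x^{2}$ term and the ``$1$'' inside the brackets, yielding (\ref{ui_z_2}) after rearrangement.
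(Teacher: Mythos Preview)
Your proposal is correct and follows essentially the same route as the paper: write the rate via the tail identity, factor the Laplace transforms by tier, impose the exclusion radii $x(P_j/P_i)^{1/\beta}$ for $j\in\{2,3\}$ (yielding $\mathcal{Z}_1$), integrate the tier-1 PGFL from (essentially) zero since no D2D exclusion applies in Case 2 (yielding $\mathcal{Z}_2(a)$), and then average against $\widehat{f}_{X_i}$ from (\ref{fx2}). One small clarification: the factor $\frac{\lambda'_1}{\lambda_1}\frac{\mathcal{G}_{3,1}}{1-\mathcal{G}_{3,1}}$ does not arise from any subtle ``removal of conditioning'' or a $u=\pi\lambda'_1|Y|^2$ substitution; it is pure algebra after you factor $\pi\lambda_i x^2/\mathbb{P}_{i,j}$ out of the exponent, since $\lambda'_1(P_1/P_i)^{2/\beta}\big/\sum_{j=2}^3\lambda_j(P_j/P_i)^{2/\beta}=\lambda'_1\big/\sum_{j=2}^3\lambda_j(P_j/P_1)^{2/\beta}=\frac{\lambda'_1}{\lambda_1}\cdot\frac{\mathcal{G}_{3,1}}{1-\mathcal{G}_{3,1}}$.
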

\begin{proof}
See Appendix \ref{sec:prooftheroem2}.
\end{proof}

When the interference is dominant, i.e., $\sigma^2\rightarrow0$, we obtain
\begin{align}\label{ui_zsL}
\mathcal{U}_{2,i}=\int_0^\infty\!\!\!\!\frac{1}{1+\mathcal{Z}_1(e^t\!-\!1)+\frac{\lambda'_1}{\lambda_1}\frac{\mathcal{G}_{3,1}}{1-\mathcal{G}_{3,1}}\mathcal{Z}_2(a)}\mathrm{d}{t}.
\end{align}
According to (\ref{ui_zst}), (\ref{ui_zsL}) can be rewritten as
\begin{equation}\label{ui_zsLL}
\mathcal{U}_{2,i} = \left\{
\begin{array}{ll}
\mathlarger{\int}_0^\infty\frac{1}{1+\mathcal{Z}_1(e^t\!-\!1)+\frac{\mathcal{G}_{3,1}}{1-\mathcal{G}_{3,1}}\mathcal{Z}_2(a)}\mathrm{d}{t},&\alpha<\alpha^*;\\
\mathlarger{\int}_0^\infty\left[1+\mathcal{Z}_1(e^t\!-\!1)+\frac{\left(1-\alpha\right)\mathcal{G}_{3,1}^2}{\alpha\left(1-\mathcal{G}_{3,1}\right)}\sum\limits_{i=1}^{M_1}f_i\mathcal{Z}_2(a)\right]^{-1}\mathrm{d}{t},&\alpha\geq\alpha^*.
\end{array}
 \right.
\end{equation}

Compared (\ref{ui_zsLL}) with (\ref{ui_zs}), we find that the  average ergodic rate of Case 2 is smaller than that of Case 1. D2D TXs  bring out additionally unnecessary interference to the users in Case 2, decreasing the rate. According to $(\ref{ui_zsLL})$, when $\alpha<\alpha^*$, the average ergodic rate in Case 2  decreases with the increase of $\alpha$  because $\mathcal{G}_{3,1}$ increases with the $\alpha$. Furthermore, set the first derivative of $\mathcal{U}_{2,i}$ with respect to $\alpha$ to zero, we can get a critical point $\widehat{\alpha}$, which exactly is the point getting the maximum number of active D2D TXs explained in subsection \ref{sec:system_4}. The number of active D2D TXs increases monotonically with the increase of $\alpha$  when $\alpha<\widehat{\alpha}$; otherwise, it decreases monotonically when $\alpha\geq\widehat{\alpha}$. More active D2D TXs lead to more unnecessary interference to the users of Case 2. Therefore, the rate  continues decreasing  with the increase of $\alpha$ in the region $[\alpha^*,\widehat{\alpha}]$ but it starts to increase from $\widehat{\alpha}$. On the whole, the rate in Case 2 decreases in the region $[0,\widehat{\alpha}]$ and then increases from $\widehat{\alpha}$. Any  of the network parameters such as the node density ($\lambda_i$), the content popularity ($\gamma$), the transmit power ($P_i$), the path-loss parameter ($\beta$) and the caching ability ($M_1$)  can affect the trend of the rate in Case 2.
\subsection{The Average Ergodic Rate in  Case 3}
Based on the definition of Case 3, we have $C_1>C_j>C_k, (j,k)\in\{(2,3),(3,2)\}$. As described above, $X_{1}$ is the distance between the reference user and its closest cache-enabled user. Let $Y_j$  be the distance between the reference user and its closest node in the $j$-th tier for $j=2,3$. Then the joint PDF of $x,y$ in Case 3 is
\begin{align}\label{fxy}
&f_{X_1,Y_j}(x,\!y)=
{\frac{4\pi^2\lambda_1\lambda_jxy}{\mathbb{P}_{1,j,k}}}{\text{exp}\left\{{-\pi\lambda_1x^2-\pi\lambda_jy^2\left[1+\frac{\lambda_k}{\lambda_j}(\frac{P_k}{P_j})^{\frac{2}{\beta}}\right]}\right\}}, ~\text{if}~ y\!>\!(\frac{P_j}{P_1})^{\frac{1}{\beta}}x.
\end{align}
If $y\!\leq\!(\frac{P_j}{P_1})^{\frac{1}{\beta}}x$, $f_{X_1,Y_j}(x,\!y)=0$. The proof is derived in Appendix \ref{sec:fxy}. As a result, we obtain the following theorem,
\begin{theorem}\label{rate3theorem}
When the interference is dominant, the average ergodic rate   of the reference user associated with the $j$-th tier ($j=2,3$) in Case 3 is
\begin{align}\label{rate3}
\!\!\!\mathcal{U}_{3,j}\!=\!\!\!\int_0^\infty\!\!\!\!\int_0^{1}\!\!\!\frac{2x(1\!-\!\mathcal{G}_{3,1})^{-1}}{\!\left\{1\!+\!\mathcal{Z}
_1(e^t\!-\!1)\!+\!\frac{\mathcal{G}_{3,1}x^2}{1\!-\!\mathcal{G}_{3,1}}\left[1\!+\!\frac{\lambda'_1}{\lambda_1}\mathcal{Z}_3(e^t\!-\!1)\right]\!\right\}^2}\mathrm{d}{x}\mathrm{d}{t}.
\end{align}
\end{theorem}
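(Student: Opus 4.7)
The plan is to mirror the template used for Theorems~\ref{rateltheorem} and~\ref{rate2theorem}, but now carrying the extra conditioning that the \emph{closest} node lies in tier~$1$ while the \emph{serving} node lies in tier~$j\in\{2,3\}$. I would start from the definition in~(\ref{rate2}) and apply the standard identity $\mathbb{E}[\ln(1+Z)]=\int_0^\infty\mathbb{P}(Z>e^t-1)\,\mathrm{d}t$. Conditioning on the pair $(X_1,Y_j)$ of closest-node distances, whose joint density is supplied by~(\ref{fxy}) on the Case~3 region $Y_j>(P_j/P_1)^{1/\beta}X_1$, and using $g_{j,0}\sim\exp(1)$ reduces the inner tail probability to the Laplace transform $\mathcal{L}_{I_r}(s)$ of the aggregate interference evaluated at $s=(e^t-1)Y_j^\beta/P_j$.

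Next I would factor $\mathcal{L}_{I_r}$ as a product of three tier-wise Laplace transforms, where the exclusion regions are dictated by the Case~3 event $C_1>C_j>C_k$: tier-$j$ interferers lie outside a disc of radius $Y_j$, tier-$k$ interferers lie outside a disc of radius $(P_k/P_j)^{1/\beta}Y_j$ coming from $C_j>C_k$, and tier-$1$ active D2D interferers lie outside a disc of radius $X_1$, with density $\lambda'_1$ taken from~(\ref{ui_zst}) rather than $\lambda_1$. Evaluating each factor via the probability generating functional of a PPP produces exponentials whose exponents are quadratic in $Y_j$ (and, for the tier-$1$ piece, quadratic in $X_1$), and whose coefficients are exactly the kernels $\mathcal{Z}_1$ and $\mathcal{Z}_3$ appearing in the statement.

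Substituting into the double integral over $(X_1,Y_j)$ and passing to the interference-dominant limit $\sigma^2\to 0$ eliminates the noise term. The decisive manipulation is then the change of variables $x=(P_j/P_1)^{1/\beta}X_1/Y_j$, which normalises the Case~3 constraint to $x\in(0,1)$ and produces a Jacobian factor $Y_j(P_1/P_j)^{1/\beta}$. After the substitution the $Y_j$-integrand is of the form $Y_j^{3}\exp\{-b(x,t)Y_j^{2}\}$, so the elementary identity $\int_0^\infty y^{3}e^{-by^{2}}\mathrm{d}y=1/(2b^{2})$ removes $Y_j$ and yields the squared denominator visible in~(\ref{rate3}), with the $2x$ in the numerator coming from the combined $Y_j^{3}$ factor and this closed-form.

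The step I expect to be the main obstacle is bookkeeping the tier-$1$ contribution correctly. Because the nearest tier-$1$ node is at $X_1\neq Y_j$, the tier-$1$ exclusion disc is governed by $X_1$, while the density to use is the thinned $\lambda'_1$; it is easy to slip and write $\lambda_1$ or to set the exclusion at $(P_1/P_j)^{1/\beta}Y_j$ instead. Keeping these two effects separate is precisely what produces the bracket $[1+(\lambda'_1/\lambda_1)\mathcal{Z}_3(e^t-1)]$ multiplied by $\mathcal{G}_{3,1}x^{2}/(1-\mathcal{G}_{3,1})$ in the denominator. Extracting the outer prefactor $(1-\mathcal{G}_{3,1})^{-1}$ then requires collecting the normalisation constant in~(\ref{fxy}) via the identity $\mathbb{P}_{1,j,k}=\mathcal{G}_{3,1}[1+(\lambda_k/\lambda_j)(P_k/P_j)^{2/\beta}]^{-1}$ implied by~(\ref{probability}) and combining it with the factor $\pi\lambda_j[1+(\lambda_k/\lambda_j)(P_k/P_j)^{2/\beta}]$ that sits inside $b(x,t)$; once this algebraic identification is done, the expression in~(\ref{rate3}) follows.
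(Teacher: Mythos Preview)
Your proposal is correct and follows essentially the same route as the paper's proof in Appendix~\ref{sec:prooftheroem3}: condition on $(X_1,Y_j)$ using the joint density~(\ref{fxy}), reduce to a product of tier-wise Laplace transforms with the exclusion radii and the thinned density $\lambda'_1$ exactly as you describe, apply the substitution $x'=(P_j/P_1)^{1/\beta}X_1/Y_j$ to normalise the Case~3 region to $(0,1)$, and then integrate out $Y_j$ via $\int_0^\infty y^{3}e^{-by^{2}}\mathrm{d}y=1/(2b^{2})$. The final algebraic clean-up in the paper uses the equivalent identity $\mathbb{P}_{1,j,k}^{-1}\mathbb{P}_{j,k}=\mathcal{G}_{3,1}^{-1}$ in place of your $\mathbb{P}_{1,j,k}=\mathcal{G}_{3,1}\bigl[1+(\lambda_k/\lambda_j)(P_k/P_j)^{2/\beta}\bigr]^{-1}$, but the two are the same statement.
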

\begin{proof}
See Appendix \ref{sec:prooftheroem3}.
\end{proof}

We can observe that when the interference is dominant, the average ergodic rate of the reference user in Case 3 is also independent on which tier the user connects to. Furthermore, denote $\mathcal{U}_l$ as the average ergodic rate of the reference user in Case 4. $\mathcal{U}_l$ shall be considered as a extremely fast speed with which the user can read out the contents from its local caching disk immediately. The higher the content popularity ($\gamma$) and caching ability ($M_1$) become, the higher probability there is for users to be active in Case 4.
\section{The Outage Probability}
Besides the average ergodic rate elaborated in the previous section, we will derive another important performance metric, i.e., the outage probability in this section. The outage probability can be defined as the probability that the instantaneous SINR of a randomly located user
is less than a threshold $\tau$. Let $\mathcal{P}_i$ be the average outage probability of the reference user associated with the $i$-th tier, which can be expressed as \cite{capacity1,ppp2,ppp3,ppp1,SINR,cache-enable},
\begin{equation}\label{outage1}
\mathcal{P}_i\triangleq\mathbb{E}[\mathbb{P}[\text{SINR}_i(x)\leq \tau]].
\end{equation}
The metric can be equivalently interpreted as the average fraction of the cell area where the receiving SINR is smaller than a specific threshold. It is also exactly the cumulative distribution function (CDF) of the SINR over the entire network.
The outage probabilities of the different Cases are analyzed in the following. As to Case 1, we have the following theorem.
\begin{theorem}\label{outageltheorem}
The average outage probability of the user connected to the $i$-th tier ($i=1,2,3$) in Case 1 is
\begin{align}\label{oi_z}
\!\!\mathcal{P}_{1,i}\!=\!1\!-\!\frac{2\pi \lambda_i}{\mathcal{G}_{3,i}}\!\!\int_0^\infty\!\!xe^{-\frac{x^\beta \sigma^2 \tau}{P_i}-\frac{\pi \lambda_ix^2}{\mathcal{G}_{3,i}}\left[1+\frac{\lambda_1+(\lambda'_1-\lambda_1)\mathcal{G}_{3,1}}{\lambda_1\mathcal{Z}_1^{-1}(\tau)}\right]}\mathrm{d}{x}.
\end{align}
\end{theorem}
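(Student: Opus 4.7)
The plan is to follow the same blueprint as the proof of Theorem \ref{rateltheorem}, now targeting the tail probability $\mathbb{P}[\text{SINR}_i(x)>\tau]$ directly instead of the expected log-SINR. Starting from (\ref{outage1}), I would condition on the serving distance $X_i=x$ and use the PDF (\ref{fx}) to write $\mathcal{P}_{1,i}=1-\int_0^\infty \mathbb{P}[\text{SINR}_i(x)>\tau\mid X_i=x]\,f_{X_i}(x)\,\mathrm{d}x$, so the remaining task is to evaluate the inner conditional tail probability and collapse the resulting double integral.

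The key simplification is that the desired channel gain $g_{i,0}$ is $\mathrm{Exp}(1)$ and independent of the aggregate interference-plus-noise $I_r$ in (\ref{sinr}). Conditioning on $I_r$ and $x$, the event $\text{SINR}_i(x)>\tau$ becomes $g_{i,0}>\tau x^\beta I_r/P_i$, which has conditional probability $\exp(-\tau x^\beta I_r/P_i)$. Averaging this over $I_r$ converts the problem into evaluating the Laplace transform of $I_r$ at $s=\tau x^\beta/P_i$. Since $I_r=\sigma^2+\sum_{j=1}^3 I_j$ splits into a deterministic noise term and three independent tier contributions (the active D2D TXs $\Phi_1$ of density $\lambda'_1$, the relays $\Phi_2$, and the BSs $\Phi_3$), the transform factors as $e^{-s\sigma^2}\prod_{j=1}^3\mathcal{L}_{I_j}(s)$.

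For each tier $j$ I would compute $\mathcal{L}_{I_j}(s)$ via the probability generating functional of a homogeneous PPP: first average out the Rayleigh fading $h_{jk}\sim\mathrm{Exp}(1)$, then integrate over the plane starting from the exclusion radius $(P_j/P_i)^{1/\beta}x$ dictated by the maximum-received-power association rule, so that no interferer in tier $j$ sits closer than the serving node in tier $i$. A standard radial substitution reduces each factor to the compact $\mathcal{Z}_1(\tau)$-style form appearing in the exponent, and combining them with $e^{-\tau x^\beta\sigma^2/P_i}$ produces exactly the argument of the exponential in (\ref{oi_z}).

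The main obstacle, as in Theorem \ref{rateltheorem}, is bookkeeping for the D2D tier. The serving-distance PDF (\ref{fx}) was derived under density $\lambda_1=\alpha\lambda_0$, but the actual interfering D2D transmitters form a PPP of density $\lambda'_1$, and the exclusion region applies only to those cache-enabled users that would genuinely have been selected as the serving node, a sub-PPP of density $\lambda_1\mathcal{G}_{3,1}$; the remaining active D2D transmitters of density $\lambda'_1-\lambda_1\mathcal{G}_{3,1}$ must be integrated from the origin with no exclusion. Treating these two sub-PPPs separately and then recombining is precisely what yields the composite coefficient $\lambda_1+(\lambda'_1-\lambda_1)\mathcal{G}_{3,1}$ inside the bracket of (\ref{oi_z}). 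Once this is handled, substituting back into $1-\int_0^\infty\mathbb{P}[\text{SINR}_i(x)>\tau\mid x]\,f_{X_i}(x)\,\mathrm{d}x$ delivers the claimed expression.
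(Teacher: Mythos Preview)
Your first three paragraphs are correct and coincide exactly with the paper's proof: condition on $X_i=x$, use the exponential law of $g_{i,0}$ to turn the tail probability into $e^{-s\sigma^2}\prod_{j=1}^3\mathcal{L}_{I_j}(s)$ at $s=\tau x^\beta/P_i$, evaluate each $\mathcal{L}_{I_j}$ via the PGFL with exclusion radius $(P_j/P_i)^{1/\beta}x$, and integrate against $f_{X_i}$. This is precisely Appendix~\ref{sec:prooftheroem4}, which just replaces $e^t-1$ by $\tau$ in the computation of Appendix~\ref{sec:prooftheroem1}.

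Your fourth paragraph, however, misidentifies where the factor $\lambda_1+(\lambda'_1-\lambda_1)\mathcal{G}_{3,1}$ comes from, and the proposed mechanism would not work. In Case~1 the association event ``served by tier $i$ at distance $x$'' forces the \emph{nearest} cache-enabled user to lie outside the ball of radius $(P_1/P_i)^{1/\beta}x$, hence \emph{all} cache-enabled users---and in particular all active D2D transmitters, since $\lambda'_1\le\lambda_1$---respect that exclusion. There is no sub-PPP of density $\lambda'_1-\lambda_1\mathcal{G}_{3,1}$ to integrate from the origin; doing so would produce a divergent $\mathcal{Z}_2$-type term (cf.\ Case~2), not $\mathcal{Z}_1(\tau)$. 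The paper simply uses density $\lambda'_j$ with the common exclusion radius for every $j$, obtaining $\mathcal{L}_{I_j}(s)=\exp\{-\pi\lambda'_j(P_j/P_i)^{2/\beta}x^2\mathcal{Z}_1(\tau)\}$. The displayed coefficient then falls out of pure algebra when you combine the exponent of $f_{X_i}$, which carries $\sum_j\lambda_j(P_j/P_i)^{2/\beta}=\lambda_i/\mathcal{G}_{3,i}$, with the interference exponent $\mathcal{Z}_1(\tau)\sum_j\lambda'_j(P_j/P_i)^{2/\beta}$: writing $\lambda'_j=\lambda_j+(\lambda'_j-\lambda_j)$ and using $\lambda'_2=\lambda_2$, $\lambda'_3=\lambda_3$ together with $(P_1/P_i)^{2/\beta}=\lambda_i\mathcal{G}_{3,1}/(\lambda_1\mathcal{G}_{3,i})$ yields exactly $\frac{\lambda_i}{\mathcal{G}_{3,i}}\bigl[1+\mathcal{Z}_1(\tau)\,\frac{\lambda_1+(\lambda'_1-\lambda_1)\mathcal{G}_{3,1}}{\lambda_1}\bigr]$. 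Replace your sub-PPP story with this algebraic step and the argument is complete.
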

\begin{proof}
See Appendix \ref{sec:prooftheroem4}.
\end{proof}

For the special scenario where the interference is dominant, i.e., $\sigma^2\rightarrow0$, we have
\begin{align}
\mathcal{P}_{1,i}=1-\left[{1+\frac{\lambda_1+(\lambda'_1-\lambda_1)\mathcal{G}_{3,1}}{\lambda_1\mathcal{Z}_1^{-1}(\tau)}}\right]^{-1}.
\end{align}
\begin{theorem}\label{outage2theorem}
The average outage probability of the user connected to the $i$-th tier ($i=2,3$) in Case 2 is
\begin{align}\label{oi_z_2}
\mathcal{P}_{2,i}=&1-\!\frac{2\pi \lambda_i}{\mathbb{P}_{i,j}}\!\!\int_0^\infty\!\!x\text{exp}\left\{\!-x^\beta P_i^{-1}\tau\sigma^2\!-\!\frac{\pi \lambda_ix^2}{\mathbb{P}_{i,j}} \left[1+\mathcal{Z}_1(\tau)+\frac{\lambda'_1}{\lambda_1}\frac{\mathcal{G}_{3,1}}{1-\mathcal{G}_{3,1}}\mathcal{Z}_2(a)\right]\right\}\mathrm{d}{x},
\end{align}
where $\mathcal{Z}_2(a)=\tau^{\frac{2}{\beta}}\frac{2a^{\frac{2-\beta}{2}}}{\beta-2}{}_2F_1[1,1-\frac{2}{\beta};2-\frac{2}{\beta};-a^{\frac{-\beta}{2}}]$ and $a$ is as small as $0$.
\end{theorem}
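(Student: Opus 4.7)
The plan is to adapt the argument underlying Theorem \ref{outageltheorem} to the Case 2 association rule, where the serving node is the closest node in tier 2 or tier 3 while interference still arrives from all three active tiers. First, conditioning on the serving distance $X_i=x$ and using $g_{i,0}\sim\exp(1)$, I would exploit the identity
\begin{equation*}
\mathbb{P}[\text{SINR}_i(x)>\tau]=\mathbb{E}_{I_r}\!\!\left[\exp\!\left(-\tfrac{\tau x^{\beta}}{P_i}I_r\right)\right]=e^{-\tfrac{\tau x^{\beta}\sigma^2}{P_i}}\prod_{j=1}^{3}\mathcal{L}_{I_j}\!\!\left(\tfrac{\tau x^{\beta}}{P_i}\right),
\end{equation*}
reducing the task to evaluating three interference Laplace transforms at $s=\tau x^{\beta}/P_i$ and then averaging against the Case 2 conditional distance density $\widehat{f}_{X_i}(x)$ from (\ref{fx2}).

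Next, I would compute each $\mathcal{L}_{I_j}(s)$ via the probability-generating functional of a homogeneous PPP with Rayleigh marks, being careful to use the correct exclusion radius in each tier. For the serving tier $i$ the exclusion radius is $x$; for the other macro/relay tier $j\in\{2,3\}\setminus\{i\}$ the maximum-power association $C_i>C_j$ forces interferers beyond $(P_j/P_i)^{1/\beta}x$; together these two tiers contribute the $\mathcal{Z}_1(\tau)$ term, exactly as in Theorem \ref{outageltheorem}. The genuine departure from Case 1 is the D2D tier: because the reference user is cache-enabled but no D2D node stores the requested file, there is no tier-association exclusion on $\Phi_1$, and the D2D interference integral must be regularised with a vanishing inner radius $a\to 0^+$, yielding the hypergeometric term $\mathcal{Z}_2(a)$. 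The prefactor $(\lambda'_1/\lambda_1)\,\mathcal{G}_{3,1}/(1-\mathcal{G}_{3,1})$ then emerges from (i) thinning the cache-enabled PPP down to the active-D2D intensity $\lambda'_1$ given by (\ref{ui_zst}), and (ii) reweighting by the Case 2 conditioning probability $\mathbb{P}_{i,j}$ absorbed in $\widehat{f}_{X_i}$.

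Finally, combining the three Laplace factors with $\widehat{f}_{X_i}(x)$ shows that all exponents collapse onto a single quadratic $-\pi\lambda_i x^2/\mathbb{P}_{i,j}$ times the bracket $1+\mathcal{Z}_1(\tau)+(\lambda'_1/\lambda_1)\mathcal{G}_{3,1}/(1-\mathcal{G}_{3,1})\mathcal{Z}_2(a)$; multiplying by $2\pi\lambda_i/\mathbb{P}_{i,j}$ and integrating in $x$ then yields (\ref{oi_z_2}). The main obstacle I expect is bookkeeping the D2D interference contribution: the absence of a natural near-field exclusion forces the artificial $a\to 0^+$ regularisation (which must survive in the final expression through $\mathcal{Z}_2(a)$), and the conditional reweighting by $\mathcal{G}_{3,1}/(1-\mathcal{G}_{3,1})$ is subtle because it depends on the joint event defining Case 2 together with the active-D2D thinning derived in Section \ref{sec:system_4}; once these two points are handled carefully, the remaining steps reduce to standard PPP Laplace-transform integrals.
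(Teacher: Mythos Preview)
Your proposal is correct and follows essentially the same route as the paper's proof: condition on $X_i=x$, use $g_{i,0}\sim\exp(1)$ to factor the coverage probability into $e^{-\tau x^\beta\sigma^2/P_i}\prod_{j=1}^3\mathcal{L}_{I_j}(\tau x^\beta/P_i)$, evaluate the tier-$2,3$ Laplace transforms with exclusion radii $(P_j/P_i)^{1/\beta}x$ to obtain $\mathcal{Z}_1(\tau)$, evaluate the tier-$1$ transform with the $a\to 0^+$ regularisation to obtain $\mathcal{Z}_2(a)$, and then average against $\widehat f_{X_i}$. One small clarification: the factor $\tfrac{\lambda'_1}{\lambda_1}\tfrac{\mathcal{G}_{3,1}}{1-\mathcal{G}_{3,1}}$ is not a probabilistic ``reweighting'' but the algebraic identity $\lambda'_1(P_1/P_i)^{2/\beta}=\tfrac{\lambda'_1}{\lambda_1}\tfrac{\mathcal{G}_{3,1}}{1-\mathcal{G}_{3,1}}\tfrac{\lambda_i}{\mathbb{P}_{i,j}}$, which lets the D2D exponent be expressed on the same $\pi\lambda_i x^2/\mathbb{P}_{i,j}$ scale as the rest; otherwise your plan matches the paper exactly.
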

\begin{proof}
See Appendix \ref{sec:prooftheroem5}.
\end{proof}

When the interference is dominant, we get
\begin{align}
\mathcal{P}_{2,i}=1-\frac{1}{1+\mathcal{Z}_1(\tau)+\frac{\lambda'_1}{\lambda_1}\frac{\mathcal{G}_{3,1}}{1-\mathcal{G}_{3,1}}\mathcal{Z}_2(a)}.
\end{align}

\begin{theorem}\label{outage3theorem}
When the interference is dominant, the average outage probability of the user connected to the $j$-th tier ($j=2,3$) in Case 3 is
\begin{align}\label{oi_z_3}
&\mathcal{P}_{3,j}\!=\!1\!-\!\!\!\int_0^{1}\!\!\!\frac{2x(1\!-\!\mathcal{G}_{3,1})^{-1}}{\!\left\{1\!+\!\mathcal{Z}
_1(\tau)\!+\!\frac{\mathcal{G}_{3,1}x^2}{1\!-\!\mathcal{G}_{3,1}}\left[1\!+\!\frac{\lambda'_1}{\lambda_1}\mathcal{Z}_3(\tau)\right]\!\right\}^2}\mathrm{d}{x},
\end{align}
where $\mathcal{Z}_3(\tau)=\frac{2\tau}{\beta-2}x^{-\beta}{}_2F_1[1,1-\frac{2}{\beta};2-\frac{2}{\beta};-\tau x^{-\beta}]$.
\end{theorem}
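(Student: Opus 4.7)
The plan is to mirror the architecture of Theorem \ref{rate3theorem}, replacing the Shannon-rate expectation by the outage event $\{\mathrm{SINR}_j(y)\leq\tau\}$ and dropping the outer integration in $t$. Starting from the definition,
\[
\mathcal{P}_{3,j}=1-\mathbb{E}\!\left[\mathbb{P}\!\left(\tfrac{P_j g_{j,0} y^{-\beta}}{I_r}>\tau \,\Big|\, X_1,Y_j\right)\right],
\]
where $I_r=I_1+I_2+I_3+\sigma^2$ and $(X_1,Y_j)$ is distributed according to the joint PDF $f_{X_1,Y_j}(x,y)$ given in (\ref{fxy}). Because $g_{j,0}\sim\exp(1)$ and is independent of the PPPs generating $I_1,I_2,I_3$, conditioning on $(x,y)$ turns the success probability into
\[
\mathbb{P}(g_{j,0}>\tau y^\beta I_r / P_j \mid x,y)=e^{-\tau y^\beta \sigma^2/P_j}\prod_{n=1}^{3}\mathcal{L}_{I_n}\!\left(\tau y^\beta /P_j\right),
\]
which is the same factorisation used in Appendices \ref{sec:prooftheroem4}--\ref{sec:prooftheroem5}; in the interference-dominated regime $\sigma^2\to 0$ the noise factor drops out.

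Next I would evaluate the three Laplace transforms by the standard PPP probability generating functional, exactly as in Appendix \ref{sec:prooftheroem3}. For tiers $2$ and $3$, the interferers live outside disks of radius $y$ and $(P_k/P_j)^{1/\beta}y$ respectively, and the two contributions combine into the factor $\mathcal{Z}_1(\tau)$ already familiar from the rate analysis. For tier $1$ (D2D), the situation is more delicate: Case 3 forces the nearest cache-enabled node at distance $X_1=x$ with $y>(P_j/P_1)^{1/\beta}x$, so the D2D interference splits into (i) the deterministic contribution of that closest cache-enabled user, which yields the factor involving $\mathcal{Z}_3(\tau)$ depending explicitly on $x$, and (ii) the residual PPP of D2D TXs at distances greater than $x$, whose Laplace transform produces the $\lambda'_1/\lambda_1$ scaling.

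With the conditional success probability assembled, I would multiply by $f_{X_1,Y_j}(x,y)$ and carry out the $y$-integration in closed form. Since the joint density already contributes an exponential quadratic in $y$, and each Laplace-transform factor contributes additional pieces of the form $\exp(-\pi\lambda_n(\cdot)\,y^2)$, the exponent collapses into a single quadratic in $y$; the Jacobian $y$ then makes the $y$-integral Gaussian and evaluates to the reciprocal of that linear combination. After a normalising substitution $x\mapsto(P_j/P_1)^{1/\beta}x/y$ that compresses the remaining domain onto $[0,1]$, one recovers the reciprocal square in the denominator of (\ref{oi_z_3}) together with the prefactor $2x(1-\mathcal{G}_{3,1})^{-1}$ encoding the marginal probability of Case 3.

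The main obstacle is the bookkeeping of the conditional D2D interference: unlike Cases 1 and 2, the nearest cache-enabled interferer is \emph{deterministically present} at distance $x$ (since by definition it does not hold the requested content), so its contribution cannot be absorbed into the PPP Laplace transform and must be carried as the separate $\mathcal{Z}_3(\tau)x^2$ term; the residual D2D PPP must then be integrated over $[x,\infty)$ rather than $[0,\infty)$, and it is at this point that the $\alpha$-dependence implicit in $\lambda'_1$ enters. Once this decomposition has been handled correctly, all remaining steps are direct applications of the same computations performed for Theorem \ref{rate3theorem}, and the stated identity follows as the $\tau$-analogue of (\ref{rate3}).
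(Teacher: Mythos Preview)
Your overall architecture is correct and matches the paper exactly: condition on $(x,y)$, factorise the success probability into a noise exponential (dropped when $\sigma^2\to 0$) and three Laplace transforms, integrate against $f_{X_1,Y_j}$, apply the substitution $x'=(P_j/P_1)^{1/\beta}x/y$ to collapse the $y$-integral, and read off (\ref{oi_z_3}) as the $\tau$-analogue of (\ref{rate3}).

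Where your proposal goes wrong is the bookkeeping you flag as the ``main obstacle''. You state that the nearest cache-enabled user at distance $x$ is a \emph{deterministically present} interferer whose contribution is the $\mathcal{Z}_3(\tau)x^2$ term, while the residual PPP on $[x,\infty)$ supplies the $\lambda'_1/\lambda_1$ scaling. Neither attribution is correct, and implementing this split would not reproduce (\ref{oi_z_3}). A single interferer at fixed distance $x$ with $\exp(1)$ fading contributes a multiplicative factor $(1+\tau y^\beta P_1 x^{-\beta}/P_j)^{-1}$ to the success probability, not an exponential-of-hypergeometric; the ${}_2F_1$ structure in $\mathcal{Z}_3$ is precisely the signature of a PPP shot-noise integral, not of a point mass. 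In the paper (Appendix \ref{sec:prooftheroem3}, carried over verbatim to Appendix \ref{sec:prooftheroem6}), the \emph{entire} tier-$1$ interference is the PPP of active D2D TXs with density $\lambda'_1$ integrated over $[x,\infty)$; no separate deterministic term is extracted, and that PPP integral is exactly what yields the $\lambda'_1\mathcal{Z}_3$ piece after the change of variables. The nearest cache-enabled user is not treated as automatically transmitting---it is active only after the $\lambda'_1/\lambda_1$ thinning and is absorbed into that PPP.

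The ``$1$'' inside the bracket $\bigl[1+\tfrac{\lambda'_1}{\lambda_1}\mathcal{Z}_3(\tau)\bigr]$ does not arise from interference at all. It is the $e^{-\pi\lambda_1 x^2}$ factor in the joint density $f_{X_1,Y_j}$ (the void probability for tier $1$), which under the substitution becomes $e^{-\pi\lambda_1(P_1/P_j)^{2/\beta}y^2x'^2}$ and sits in the same exponent as the $\lambda'_1\mathcal{Z}_3$ contribution; dividing through by $\lambda_1$ then produces the ratio $\lambda'_1/\lambda_1$. If you carried out your proposed decomposition you would pick up an extra single-node Laplace factor on top of the PPP term, and the $y$-integral would no longer close to the clean reciprocal square in (\ref{oi_z_3}).
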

\begin{proof}
See Appendix \ref{sec:prooftheroem6}.
\end{proof}

From Theorems \ref{outageltheorem}, \ref{outage2theorem} and \ref{outage3theorem} we see that, similar with the average egodic rate, the average outage probability is not affected by which tier the user connects to when the interference is dominant. As to users in Case 1, when the fraction of cache-enabled users is small, the interference and the desired signal power change by the same amount with the change of the transmit power or the node densities. However, the unnecessary interference triggered by D2D TXs depraves the outage probability in Case 2 and Case 3. Moreover, let $\mathcal{P}_l$ be the outage probability of the user in Case 4, which is as small as 0 because of the immediate reading from the local caching disk.
\section{The Throughput and The Delay}\label{sec:throughput}
We have analyzed the performance metrics from the perspective of a single user. Based on the analysis results of previous sections, the throughput of the entire network will be derived in this section. The delay and the critical condition for the network to keep steady will be elaborated.

We conduct analysis in a typical BS cell. According to the PPP model for the node locations, the average number of users in a typical BS area is $\frac{\lambda_0}{\lambda_3}$ \cite{SINR}. We now introduce the traffic dynamics of request arrivals and departures. Requests of $\frac{\lambda_0}{\lambda_3}$ users are considered as a unified event and modeled as a Poisson process with parameter $\varsigma$ $[$requests/s$]$, i.e., the request interarrival times are exponentially distributed random variables with mean $\frac{1}{\varsigma}$ seconds \cite{kongtao}\cite{Queue}. It implies that requests of a single user is a Poisson process with parameter $\frac{\varsigma\lambda_3}{\lambda_0}$ $[$requests/s$]$. The arriving requests require to access some sets of contents. Volumes of the sets are independent exponentially distributed random variables with mean $\frac{1}{\varrho}$ $[$contents/request$]$,  and the request interarrivals and request sets are independent \cite{kongtao}\cite{Queue}. We define $\varsigma$ as the total request arrival rate and $\sigma=\frac{\varsigma S}{\varrho}$ as the total traffic demand (in $[$bits/s$]$) in the typical BS cell.

Based on Table \ref{table:pushing}, each element of matrix $\mathbf{D}$ represents the probability that a randomly chosen user is active in the state of $\mathbf{\chi}=(C,T,W)$. Therefore, $(i,j)$ for $i=1,2,...,8$ and $j=1,2,...,4$ can represent the state of the user with the mapping $g:(C,T,W)\rightarrow(i,j)$. Then the density of users in the state of $(i,j)$ is $\lambda_{i,j}=\lambda_0D_{i,j}$. Corresponding to the consideration in Section \ref{sec:rate} that service nodes are in the full load state, in this section, BSs, relays and D2D TXs without user being served are assumed to make dummy transmissions which bring interference to others as well \cite{dummy}. Consider $w$ Hz bandwidth are shared among different tiers. Let element $A_{i,j}$  of matrix $\mathbf{A}_{8\times4}$ denote the average ergodic rate of the user in the state of $(i,j)$. $\mathbf{A}$ is generated by
\begin{equation}\label{ratee}
\left\{
\begin{array}{ll}
A_{2m-1,j}=\eta w~\mathcal{U}_{m,j}\mathbf{1}(D_{2m-1,j}\neq0),&\text{for}~m=1,2,...,4;\\
A_{2m,j}=\eta w f(\mathcal{U}_{m,j})\mathbf{1}(D_{2m,j}\neq0),&\text{for}~m=1,2,...,4,
\end{array}
 \right.
\end{equation}
where $\mathbf{1}(\cdot)$ is the indicator function and $\eta=1.443$ is the conversion factor between $[$nats$]$ and $[$bits$]$. $\mathcal{U}_{m,j}$ is the average ergodic rate  analyzed in Section \ref{sec:rate} and $\mathcal{U}_{4,4}=\mathcal{U}_{l}$. We consider $\mathcal{U}_{2,1}$, $\mathcal{U}_{3,1}$ and $\mathcal{U}_{4,j}$ (for $j=1,2,3$) as $0$ just like we define the matrix $\mathbf{D}$ even though no user is active in these states, and these virtual variables are defined to simplify the description. Due to the delay caused by the additionally wired transmission process and the limited backhaul, we assume the users can get the content with the service rate of $f(\mathcal{U}_{m,j})$ when the backhaul is needed, which is a function of $\mathcal{U}_{m,j}$ and is smaller than $\mathcal{U}_{m,j}$.

In the coverage of a D2D TX (a relay, a BS, a cache-enabled user itself), the average number of users who are in the state of $(i,j)$ is $n_{i,j}=\frac{\lambda_{i,j}}{\lambda'_j}$ \cite{SINR} for $j=1 (2,3,4)$ and $i=1,2,...,8$, where $\lambda'_4=\alpha\lambda_0$. Up to now, we can divide users associated to a D2D TX (a relay, a BS, a cache-enabled user itself) into $8$ classes based on the $j$-th column of the matrix $\mathbf{D}$. The corresponding class request arrival rate and class traffic demand are respectively $\zeta_{i,j}=\frac{n_{i,j}\lambda_3\varsigma}{\lambda_0}$ and $\sigma_{i,j}=\frac{\zeta_{i,j} S}{\varrho}$ for $j=1 (2,3,4)$ and $i=1,2,...,8$. Similarly, let $x_{i,j}$ be the number of requests in class $i$ of the queue at a D2D TX (a relay, a BS, a cache-enabled user itself) for $j=1 (2,3,4)$. And let $x_j=(x_{1,j},x_{2,j},...,x_{8,j})$ be the vector counting the number of user requests in each class. The orthogonal transmission is assumed, where user requests arriving at a service node are served one after the other in a round-robin manner with equal portion of time. We may view a service node as a processor, where 8 classes of user requests with different arrival and service rates are queueing to be served. So the request arrivals and departures at a D2D TX (a relay, a BS, a cache-enabled user itself) can be regarded as a \emph{multiclass processor-sharing queue}.

Denote $\mathbb{D}:=\{1,2,...,8\}$ as the set of classes. Let the process $\{X_j(t); t\geq0\}$ describe the number of user requests in different classes of the queue at a D2D TX (a relay, a BS, a cache-enabled user itself) at time $t$ for $j=1 (2,3,4)$. Then $X_j(t)$ has discrete state space $\mathbb{N}^\mathbb{D}$ and is a continuous-time Markov process which can be generated by:
\begin{align}\label{generator}
\left\{
\begin{array}{ll}
q(x_j,x_j+\varepsilon_i)=\zeta_{i,j},&x_j\in\mathbb{N}^\mathbb{D}\\
q(x_j,x_j-\varepsilon_i)=\frac{\varrho  A_{i,j}}{S} \frac{x_{i,j}}{x_{\mathbb{D},j}},&x_j\in\mathbb{N}^\mathbb{D}, x_j>0.
\end{array}
 \right.
\end{align}
where $\varepsilon_i$ represents the vector of $\mathbb{N}^\mathbb{D}$ whose $i$-th element is 1 and 0 elsewhere. $x_{\mathbb{D},j}\triangleq\sum_{i\in\mathbb{D}}x_{i,j}$ represents the total number of users in the queue.  As to a steady network, the number of the requests leaving and arriving at the cell should be equal in the long run, i.e., the throughput is equal to the traffic demand. Define the throughput per  request as the ratio of the given throughput (i.e., the traffic demand) by the mean number of user requests for a steady system \cite{Queue}. Then,
\begin{proposition}\label{eee}
The mean number of user requests, the throughput per user request (Thr./Req.), and the delay of the $i$-th class ($i=1,2,...,8$) at a D2D TX (a relay, a BS, a cache-enabled user itself) for $j=1 (2,3,4)$ are respectively given by,
\begin{align}\label{lemmanew}
&\bar{N}_{i,j}=\frac{\sigma_{i,j}}{\left(1\!-\!\frac{\sigma_j}{\sigma_{c,j}}\right) A_{i,j}},~~~~\bar{T}_{i,j}=\left(1\!-\!\frac{\sigma_j}{\sigma_{c,j}}\right) A_{i,j},~~~~\bar{D}_{i,j}=\frac{1}{\left(1\!-\!\frac{\sigma_j}{\sigma_{c,j}}\right) A_{i,j}\varrho S^{-1}},
\end{align}
where $\sigma_j=\sum_{i=1}^{8}\sigma_{i,j}$ can be considered as the total traffic demand in the queue at a  service node. $\sigma_{c,j}=\frac{\sigma_j}{\sum_{i=1}^8\sigma_{i,j}A_{i,j}^{-1}}$ is a critical value such that the  queue will be at the steady state when $\sigma_j<\sigma_{c,j}$. And the  mean number of user requests, the Thr./Req., and the delay in the queue at a service node respectively are,
\begin{equation}\label{typical}
\bar{N}_j=\frac{\sigma_j}{\sigma_{c,j}-\sigma_j},~~~~~\bar{T}_j=\sigma_{c,j}-\sigma_j,~~~~~\bar{D}_j=\frac{\sigma_j}{(\sigma_{c,j}-\sigma_j)\bar{\zeta}_j},
\end{equation}
where $\bar{\zeta}_j=\sum_{i=1}^8\zeta_{i,j}$ is considered as the total traffic arrival rate of the queue at a service node.
\end{proposition}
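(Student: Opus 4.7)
The plan is to recognise the Markov process generated by (\ref{generator}) as a standard multiclass M/M/1 processor-sharing queue and then read off all six quantities from the product-form stationary distribution together with Little's law.

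First I would translate the physical parameters into queueing notation. Class-$i$ requests arrive at the service node of tier $j$ as an independent Poisson process of rate $\zeta_{i,j}$, each carrying an exponentially distributed content volume of mean $S/\varrho$ bits. When the server is fully allocated to class $i$ its bit-rate is $A_{i,j}$, so the per-request service rate is $\mu_{i,j}=\varrho A_{i,j}/S$ and the class load is $\rho_{i,j}=\zeta_{i,j}/\mu_{i,j}=\sigma_{i,j}/A_{i,j}$. Summing over classes gives the total load $\rho_j=\sum_i\rho_{i,j}=\sigma_j/\sigma_{c,j}$, which at once identifies $\sigma_j<\sigma_{c,j}$ as the necessary and sufficient condition for $\{X_j(t)\}$ to be positive recurrent; this is what motivates the name \emph{critical value} for $\sigma_{c,j}$. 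The processor-sharing discipline is encoded by the death rate $\mu_{i,j}\,x_{i,j}/x_{\mathbb{D},j}$ in (\ref{generator}), i.e.\ the server is split equally among all requests currently in the queue.

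Next I would show that under $\rho_j<1$ the stationary distribution of $X_j$ takes the multinomial–geometric product form
\begin{equation*}
\pi(x_j) \;=\; (1-\rho_j)\,\binom{x_{\mathbb{D},j}}{x_{1,j},\ldots,x_{8,j}}\,\prod_{i=1}^{8}\rho_{i,j}^{\,x_{i,j}}.
\end{equation*}
One can either substitute this ansatz into the global balance equations for (\ref{generator}) and check it term by term, or, more cleanly, invoke Kelly's theorem for symmetric queues: the death kernel in (\ref{generator}) factors into a position weight $1/x_{\mathbb{D},j}$ and a per-class rate $\mu_{i,j}$, which is precisely the symmetry condition guaranteeing insensitivity and the above product form. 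Marginalising $\pi$ yields $\bar{N}_{i,j}=\rho_{i,j}/(1-\rho_j)$; substituting $\rho_{i,j}=\sigma_{i,j}/A_{i,j}$ and $\rho_j=\sigma_j/\sigma_{c,j}$ recovers exactly the $\bar{N}_{i,j}$ stated in (\ref{lemmanew}).

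The remaining expressions follow by routine manipulations. By the paper's own definition, $\bar{T}_{i,j}=\sigma_{i,j}/\bar{N}_{i,j}$, which gives $\bar{T}_{i,j}=(1-\sigma_j/\sigma_{c,j})A_{i,j}$ immediately, while Little's law $\bar{N}_{i,j}=\zeta_{i,j}\bar{D}_{i,j}$ together with $\zeta_{i,j}=\varrho\sigma_{i,j}/S$ supplies $\bar{D}_{i,j}$. Summing the per-class means gives $\bar{N}_j=\rho_j/(1-\rho_j)=\sigma_j/(\sigma_{c,j}-\sigma_j)$, after which $\bar{T}_j=\sigma_j/\bar{N}_j$ and Little's law with the total arrival rate $\bar{\zeta}_j$ deliver the formulas in (\ref{typical}). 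The main obstacle is the product-form step: one cannot simply aggregate the eight classes into a birth–death chain, because different classes complete at different bit-rates, so $X_{\mathbb{D},j}$ alone is not Markovian. Recognising the symmetric-queue structure of processor sharing — so that Kelly's theorem applies verbatim and yields the multinomial–geometric law above — is therefore where the real work lies; everything afterwards is bookkeeping.
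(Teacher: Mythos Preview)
Your proposal is correct and in fact supplies far more than the paper does: the paper's own ``proof'' consists of a single sentence deferring to reference \cite{Queue} and omitting all details. Your identification of $\rho_{i,j}=\sigma_{i,j}/A_{i,j}$ and $\rho_j=\sigma_j/\sigma_{c,j}$, the appeal to the product-form stationary law for multiclass processor sharing (via Kelly's symmetric-queue theorem or direct verification of global balance), and the subsequent use of Little's law are exactly the standard machinery that the cited reference would invoke, so your route is the one the authors have in mind, only spelled out.
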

\begin{proof}
The results can be deduced from \cite{Queue} and the references therein, and the proof is omitted in this paper to avoid the unnecessary repetition.
\end{proof}

Theorem \ref{rateltheorem} indicates that when the interference is dominant and $\alpha<\alpha^*$,  the service rates keep constant despite the change of the density of the BS/relay/D2D. It may lead to a misconception that the infrastructure can be deployed as scattered as possible. However, Proposition \ref{eee} reveals the critical condition to keep  the system steady, i.e., $\sigma_j<\sigma_{c,j}, \forall j=1,2,...,4$ should be satisfied when arranging the network. We call $\frac{\sigma_j}{\sigma_{c,j}}$ \emph{steady ruler} of the network. The critical condition decides the maximum load/throughput of the system, e.g., the maximum arrival rate ($\varsigma^*$) of the request,
\begin{equation}
\varsigma^*=\text{max}\left\{\varsigma~\begin{array}{|c}
                           \frac{\sigma_j}{\sigma_{c,j}}<1, \forall i\in\mathbb{D}, j=1,2,...,4
                         \end{array}\right\}.
\end{equation}
Besides, Proposition \ref{eee} points out the maximum ratio of $\frac{\lambda_0}{\lambda_i},i=1,2,3$ for the network planning. Apparently, smaller densities of network infrastructures means more user connection per BS/relay/D2D, which will destruct the steady state of the queue and lead to the request congestion.

Analyzing Proposition \ref{eee} we observe that, larger content size $S$, arrival rate $\varsigma$ and $\frac{1}{\varrho}$ are not helpful for the improvement of the performance.  Higher service rate  ($A_{i,j}$) is important for the smooth departure of requests, yielding the performance improvement in terms of the Thr./Req. and delay. Moreover, the network performance highly depends on the number of users in each class ($n_{i,j}$), which are determined by the transmission power ($P_i$), node density ($\lambda_i$), content popularity ($\gamma$), caching ability ($M_1, M_2$) and association protocol. Specifically, when the user is able to obtain contents from its local caching immediately, the Thr./Req. (the delay) tends to infinity (zero) for the fact that the value of $A_{7,4}=\mathcal{U}_l$ is extremely high.
\section{Numerical Results}
In this section, we simulate the cache-enabled network to verify the performance of the proposed system. We obtain the results with Monte Carlo methods in a square area of $2000\text{m} \times 2000\text{m}$, where the nodes are scattered based on independent homogeneous PPPs with intensities of   $\{\lambda_0,\lambda_2,\lambda_3\}=\{\frac{300}{\pi500^2},\frac{5}{\pi500^2},\frac{1}{\pi500^2}\}~\text{nodes/m}^2$. The transmit powers are $\{P_1,P_2,P_3\}=\{23,33,43\}$ dBm and 20 MHz bandwidth are shared among different tiers. We set the path-loss $\beta=4$, total number of contents $N=200$, the size of each content $S=100$ Mbits, the caching ability $M_1=5$ and $M_2=50$, and the content popularity $\gamma=0.8$.  These typical parameters do not change unless  additional statements are clarified.
\begin{figure}[t]
\centering
\includegraphics[width=3.5in]{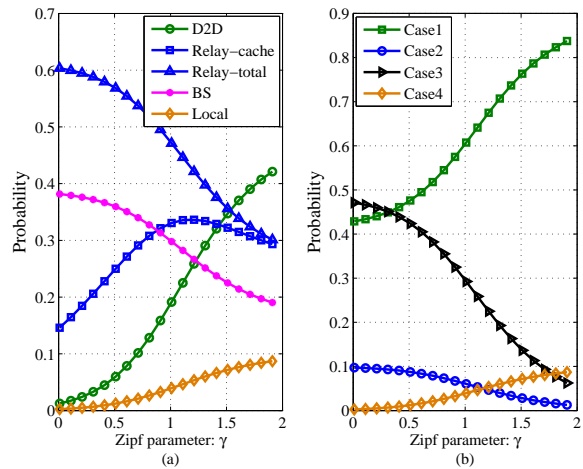}
\caption{The subfigure (a) illustrates the probability of users associated to different service nodes; the subfigure (b) illustrates the probability of users active in different Cases; $\alpha=0.1$.}
\label{probabilitycase}
\end{figure}

As illustrated in the subfigure (a) of Fig. \ref{probabilitycase}, the probability for the user to obtain contents from the D2D TX or its local caching space becomes higher with the increase of $\gamma$. Smaller fraction of users need to access the relay (Relay-total in the figure) or BS with more \textquotedblleft concentrated\textquotedblright~contents. It implies that cache-enabled network reduces the cell load of the BS tier and the relay tier. The number of users accessing content from the caching space of relays (Relay-cache in the figure) increases first and then deceases because of the traffic offloading ability of the D2D tier. In the subfigure (b), the probabilities of Case 1 and Case 4 increase with $\gamma$ as more contents can be obtained via the D2D link or from the local caching, yielding the increase of the cache hit rate.

The theoretical estimates and simulating results of the the average ergodic rates in Case 1-3 are illustrated in Fig. \ref{rate}, and they are consistent well. We obtain $\alpha^*=13.78\%$ and $\widehat{\alpha}=21.96\%$ with the parameters in  subfigure (a) based on (\ref{alpha}).  The rate in  Case 1 keeps constant when $\alpha$ changes in the region $[0, \alpha^*]$ and it starts to increase obviously from $\alpha^*$. It highly depends on whether all of the cache-enabled users are active as D2D TXs. We observe that the number of active D2D TXs increases linearly with $\alpha$ when $\alpha<\alpha^*$ as all of the cache-enabled users need to be active as D2D TXs. Only a part of cache-enabled users are active as D2D TXs when $\alpha>\alpha^*$ and it comes to the maximum number when $\alpha=\widehat{\alpha}$. As to Case 2, the rate decreases with the increase of $\alpha$ in the region $\alpha<\widehat{\alpha}$ owing to the increase of D2D TXs, then it increases slightly after $\widehat{\alpha}$ for the number of active D2D TXs goes down. The rates in Case 2 and Case 3 are smaller than those in Case 1. It is because the users of Case 2 and Case 3 can not obtain any benefit except for the unnecessary interference from D2D TXs. With the parameters in subfigure (b), we get $\alpha^*=0$ and $\widehat{\alpha}=31.62\%$. It means the transmit power and coverage of the D2D link are limited, thus only parts of cache-enabled users are active as D2D TXs for an arbitrary $\alpha$. The number of active D2D TXs increases with the increase of $\alpha$ when $\alpha\!<\!\widehat{\alpha}$. Consequently, the rate of Case 1 increases but those of Case 2 decrease with the increase of $\alpha$ in subfigure (b).
\begin{figure}[t]
\centering
\includegraphics[width=3.5in]{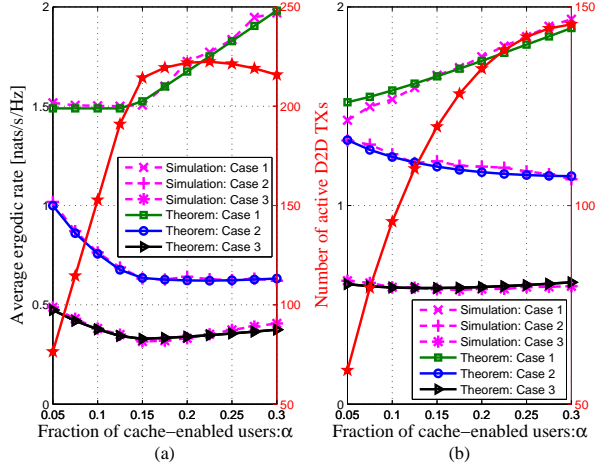}
\caption{The average ergodic rates of different Cases: the transmit powers of the subfigure (a) are $\{P_1,P_2,P_3\}=\{23,33,43\}~\text{dBm}$, those of the subfigure (b) are $\{P_1,P_2,P_3\}=\{13,33,43\} ~\text{dBm}$. The left and right ordinate (black and red) respectively correspond to the average ergodic rates and the number of active D2D TXs.}
\label{rate}
\end{figure}
\begin{figure}[t]
\centering
\includegraphics[width=3.5in]{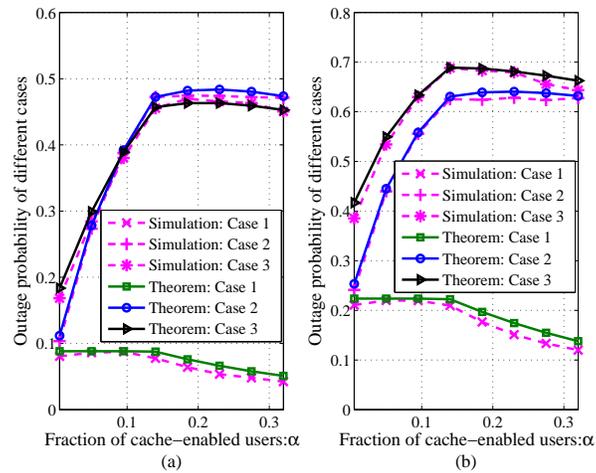}
\caption{The outage probabilities of different Cases: the SINR threshold of the subfigure (a) and (b) are $\tau=-10\text{dB}$ and $\tau=-5\text{dB}$, respectively.}
\label{outage}
\end{figure}

The outage probabilities of different Cases are demonstrated in Fig. \ref{outage}. The outage probability decreases with the decrease of SINR target $\tau$, where lower SINR target means more interference is allowed. Similar to the average rate, the outage probability of Case 1 keeps constant before $\alpha$ goes to $\alpha^*$ and then decreases obviously.  Case 2 and Case 3 have higher outage probability compared with Case 1. Fig. \ref{outage} can be explained from another perspective with Fig. \ref{CDF} where the CDF of the SINR are demonstrated.  As an example, for SINR $=-10$ dB in Fig. \ref{CDF}, the value of the CDF of Case 2 is smaller than that of Case 3 when $\alpha=0.05$, while the former approximately equals to the latter when $\alpha=0.1$. It conforms to what is illustrated in Fig. \ref{outage}. Moreover, both $0.05$ and $0.1$ are smaller than $\alpha^*=13.78\%$, so the CDF of SINR for Case 1 when $\alpha=0.05$ coincides with that when $\alpha=0.1$ in Fig. \ref{CDF}.
\begin{figure}[t]
\centering
\includegraphics[width=3.5in]{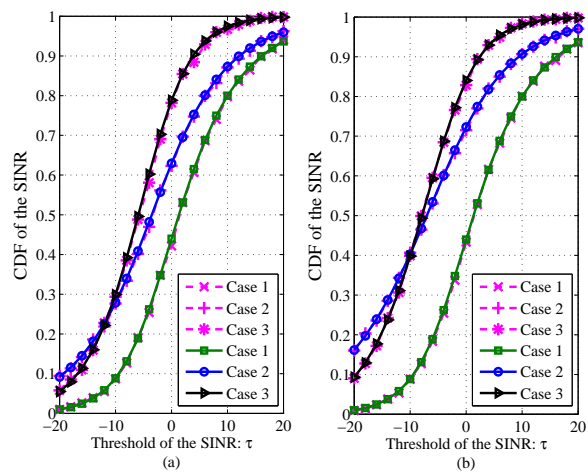}
\caption{The CDF of SINR for different Cases: the fraction of cache-enabled user in the subfigure (a) and (b) are $\alpha=0.05$ and $\alpha=0.10$, respectively.}
\label{CDF}
\end{figure}

Fig. \ref{throughtputper} compares the Thr./Req. of the cache-enabled network with that of the baseline. As in the subfigure (a), the Thr./Req. of class 1 (class 3) at the BS is higher than (approximately equals to) that in the baseline. Meanwhile, because of the strong unnecessary interference triggered by the D2D TXs, class 5 has lower Thr./Req. than that of the baseline. The other virtual classes whose Thr./Req. are zero are not illustrated in the figure. Similarly, The performance of the $j$-th $(j=1,...,4)$ class at the relay is better than or approximately equals to that in the baseline, while class 5 and class 6 get worse. Additional process is needed for the relay to fetch the uncached contents via the wired backhaul link, so the Thr./Req. of class 4 (class 6) is smaller than that of class 3 (class 5). In Fig.  \ref{throughtputper} (c), we compare the Thr./Req. at the D2D TX with that at the BS in the baseline. The Thr./Req. of D2D TX outperforms that of BS in the baseline by 46.8\%-58.1\%.  D2D TXs give rise to interference, yet at the same time traffic loads of BSs and relays are offloaded by D2D TXs and the caching resources. Consequently, the Thr./Req. in the queue at the relay and BS are not seriously affected by the interference, while the throughput over the entire network increases significantly because of the increase of the number of the simultaneously active nodes and the tolerable request arrival rate.

We present the \emph{steady ruler} versus the request arrival rate in Fig. \ref{steady2} to evaluate the throughput gain of the network. Based on the value of the \emph{steady ruler} $\frac{\sigma_j}{\sigma_{c,j}}$ for different typical queues, we circle out the critical point for the network to keep steady. From the figure we can see that the \emph{steady ruler} of the relay and the D2D are far smaller than that of the BS. So the maximum load of the network, e.g., the maximum request arrival rate $\varsigma^*$, is decided by the state of the queue at the BS. The BS has a wider range of the coverage compared with that of the relay and the D2D owning to the higher transmit power. Most of users are covered by the BS and join in the queue at the BS. According to the critical point, we observe that when $\gamma=$ 0.8~(1.8) the throughput gain over the entire network is 13.3\% (57.3\%) compared with that of the baseline. Moreover, the \emph{steady ruler} of the relay and the D2D is smaller than 1, so the relay can be deployed in the high-density area, and more opportunity should be given to the user to access the content via the D2D link, yielding the efficient offloading of the cellular traffic.
\begin{figure*}[t]
\centering
\includegraphics[width=6.0in]{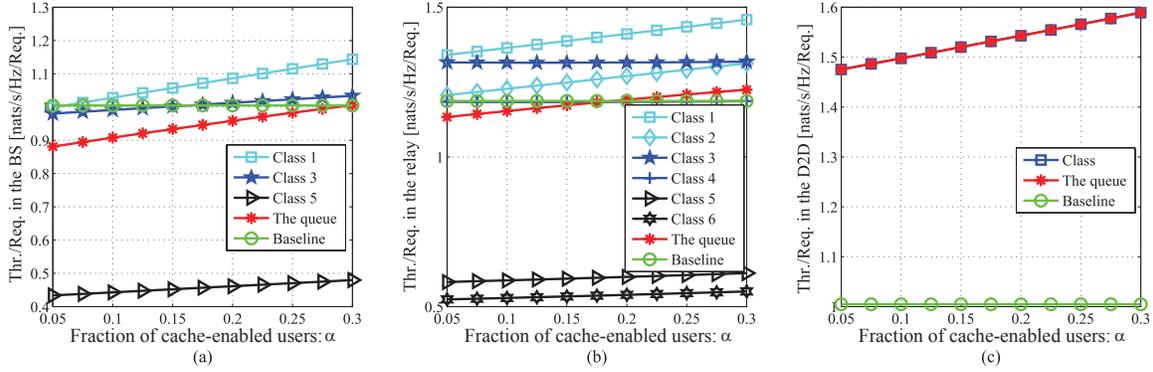}
\caption{The subfigure (a), (b) and (c) respectively illustrate the Thr./Req. at the BS, relay and D2D TXs: $\{P_1,P_2,P_3\}=\{13,33,43\}~\text{dBm},\{\lambda_0,\lambda_2,\lambda_3\}=\{\frac{300}{\pi500^2},\frac{30}{\pi500^2},\frac{6}{\pi500^2}\}~\text{nodes/m}^2$, $\varsigma=0.25, \varrho=1$.}
\label{throughtputper}
\end{figure*}
\begin{figure}[t]
\centering
\includegraphics[width=3.5in]{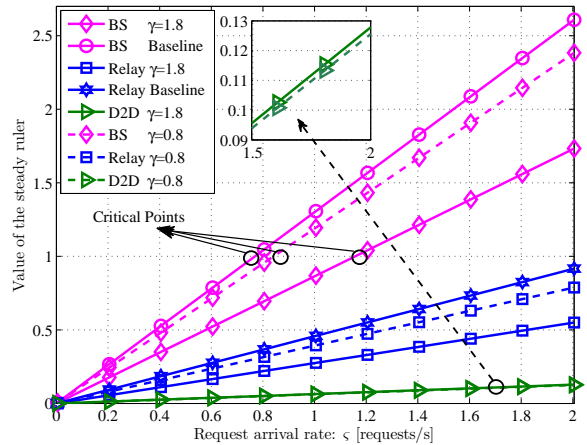}
\caption{The throughput gain for the cache-enabled network compared with that of the baseline: $\{P_1,P_2,P_3\}=\{13,33,43\}~\text{dBm},\{\lambda_0,\lambda_2,\lambda_3\}=\{\frac{300}{\pi500^2},\frac{30}{\pi500^2},\frac{6}{\pi500^2}\}~\text{nodes/m}^2$, $\varsigma=0.25, \varrho=1,\alpha=0.25$.}
\label{steady2}
\end{figure}
\begin{figure}[t]
\centering
\includegraphics[width=3.5in]{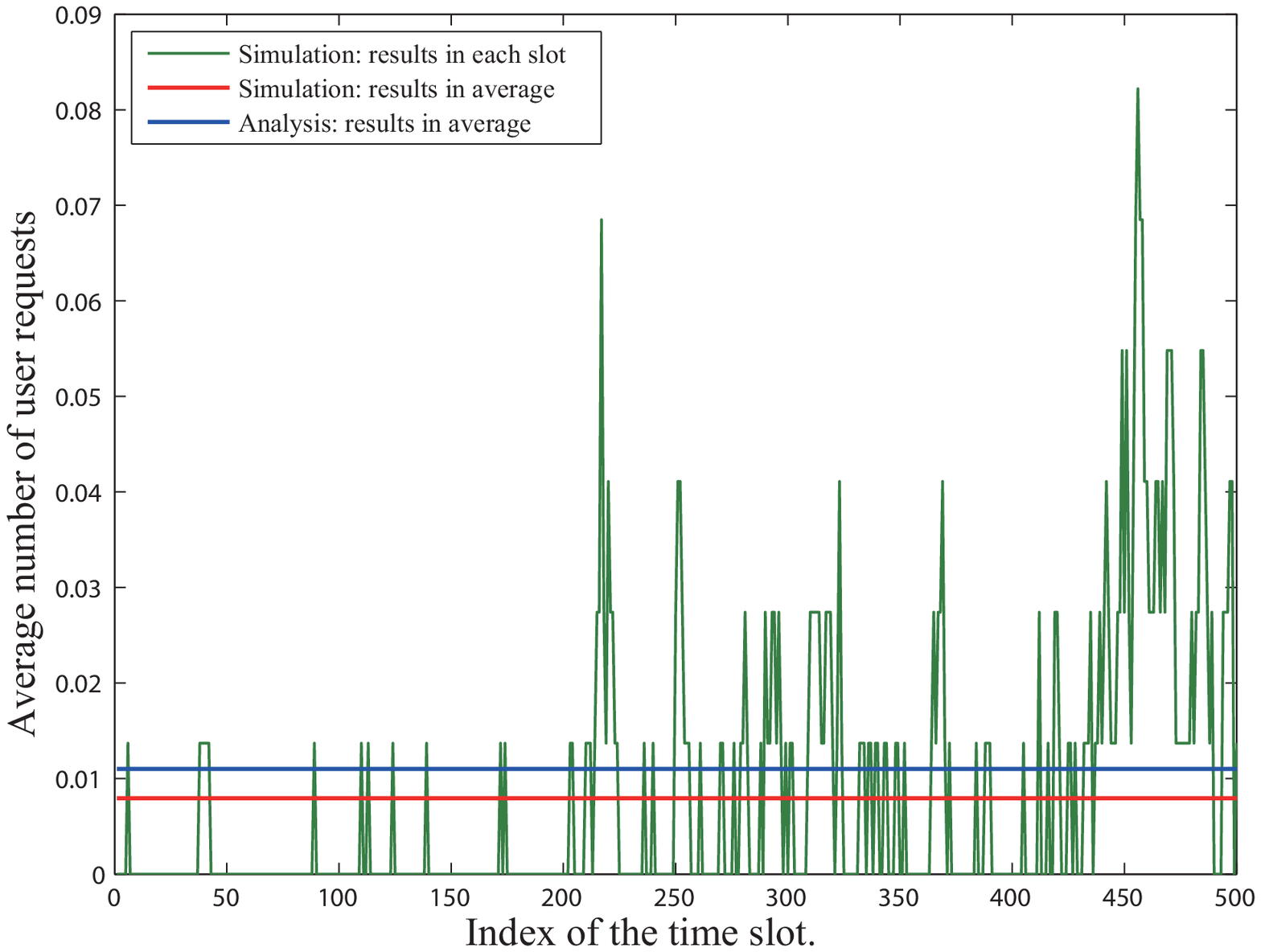}
\caption{The number of user requests in the queue of the D2D TX: $\{P_1,P_2,P_3\}=\{13,33,43\}~\text{dBm},\{\lambda_0,\lambda_2,\lambda_3\}=\{\frac{300}{\pi500^2},\frac{30}{\pi500^2},\frac{6}{\pi500^2}\}~\text{nodes/m}^2$, $\varsigma=0.25,\varrho=1,\alpha=0.25$, slot-step $=0.2$ second.}
\label{N_D2D}
\end{figure}

For further discussions, we divide the time into slots with equal duration for content transmission. Requests with different volumes of contents are responded with corresponding rate in several slots. In the simulation, we choose 500 slots of them to investigate the number of user requests at the D2D TX in each time slot, based on which the average number of user requests during the 500 slots are also illustrated in Fig. \ref{N_D2D}. We observe that the average number of user requests in the simulation is lower than that of the analysis as dummy transmissions are assumed in the analysis. More precise analysis  can be a promising topic for the further work and the analysis result in this paper is a lower bound of the performance for the cache-enabled network.
\section{conclusion}
The paper aims to model and evaluate the performance of the wireless HetNet where the RAN caching and D2D caching coexist. The caching ability is available in both the relay and some of the users. We propose to cache the most popular multimedia contents via broadcasting during off-peak time to be reused for frequent access. Firstly, we model the node locations of the HetNet as  mutually independent PPPs. According to the maximum received-power cell association scheme, users can flexibly connect to the cellular and D2D link. Users are classified into four Cases according to whether the requesting user is cache-enabled and the type of the service node. We theoretically elaborate the average ergodic rates and the outage probabilities of  different Cases in the downlink. Based on the Case the user is active in, the user requests arriving at a D2D TX (relay, BS, cache-enabled user itself) can be classified into different classes. The throughput and the delay of different classes are then derived with modeling the \emph{multiclass processor-sharing queue} and the continuous-time Markov process. We further provide the \emph{steady ruler} for the HetNet, which decides the maximum traffic load/throughput of the network. Numerical results show that the global throughput of the cache-enabled system can increase by 57.3\% compared with that of the system without caching ability.
\begin{appendix}
\subsection{Proof of Proposition \ref{ccc}}
With extension of (\ref{r1r2r3}), the joint PDF of the distance $R_1,R_2,...,R_K$ is
\begin{equation}
f_{R_1,...,R_K}(r_1,...,r_K)=\left(\prod_{i=1}^{K}2\pi\lambda_{i}r_{i}\right)e^{-\pi\sum\limits_{i=1}^{K}\lambda_i r_{i}^{2}}.
\end{equation}
The Euclidean region $\Omega$ satisfying $C_{t_1}\!>\!C_{t_2}\!>\!...\!>\!C_{t_{K-1}}\!>\!C_{t_K}$ is
\begin{equation}
\Omega = \left\{
\begin{array}{rl}
0<r_{t_1}<+\infty,\\
\left(\frac{P_{t_{2}}}{P_{t_1}}\right)^\frac{1}{\beta}r_{t_1}<r_{t_{2}}<+\infty,\\
...~~~~~~~~~~~\\
\left(\frac{P_{t_{K}}}{P_{t_{K-1}}}\right)^\frac{1}{\beta}r_{t_{K-1}}<r_{t_{K}}<+\infty.
\end{array}
 \right.
\end{equation}
So the probability of $C_{t_1}\!>\!C_{t_2}\!>\!...\!>\!C_{t_{K-1}}\!>\!C_{t_K}$ is
\begin{align}\
&\mathbb{P}(C_{t_1}\!>\!C_{t_2}\!>\!...\!>\!C_{t_{K-1}}\!>\!C_{t_K})=\iint\!\!...\!\!\int_{\Omega}f_{R_{t_1}\!,R_{t_{2}},...,R_{t_K}}\!(r_{t_1},r_{t_{2}},...,r_{t_K})\mathrm{d}{r_{t_K}}\!\mathrm{d}{r_{K-1}}...\mathrm{d}{r_1}
\nonumber\\
&=\iint\!\!...\!\!\int_{\Omega}\left(\prod_{i=1}^{K}2\pi\lambda_{t_i}r_{t_i}\right)e^{-\pi\sum\limits_{i=1}^{K}\lambda_{t_i}r_{t_i}^{2}}\mathrm{d}{r_{t_K}}\!\mathrm{d}{r_{K-1}}...\mathrm{d}{r_1}\stackrel{(a)}{=}\prod_{n=1}^{K-1}\left[\sum_{m=n}^K\frac{\lambda_{t_m}}{\lambda_{t_n}}\left(\frac{P_{t_m}}{P_{t_n}}\right)^\frac{2}{\beta}\right]^{-1},
\end{align}
where (a) follows when integrating with the region of $\Omega$, and the proof is completed. \hfill\ensuremath{\blacksquare}
\subsection{Proof of Lemma \ref{ddd}}
In this Case, we only need to ensure that $C_i$ is higher than that of any other tiers, and the order of the other tiers need not to be cared. So we have
\begin{align}\label{pp}
&\mathcal{G}_{K,i}\triangleq \mathbb{P}(C_{i}\!>\!\max_{\forall n\neq i}C_{n})=\mathbb{P}(C_{i}\!>C_{1},...,C_{i}\!>C_{i-1},C_{i}\!>C_{i+1},...,C_{i}\!>C_{K})\nonumber\\
&=\iint\!\!...\!\!\int_{\Omega}f_{R_{1},...,R_K}\!(r_{1},...,r_K)\mathrm{d}{r_{K}}\mathrm{d}{r_{K-1}}...\mathrm{d}{r_1}
=\left[\sum_{m=j}^K\frac{\lambda_{t_m}}{\lambda_{t_j}}\left(\frac{P_{t_m}}{P_{t_j}}\right)^\frac{2}{\beta}\right]^{-1},
\end{align}
where the integral region $\Omega$ now is
\begin{equation}
\Omega = \left\{
\begin{array}{rl}
0<r_{i}<+\infty,\\
\left(\frac{P_{1}}{P_i}\right)^\frac{1}{\beta}r_{i}<r_{1}<+\infty,\\
...~~~~~~~~~~~\\
\left(\frac{P_{i-1}}{P_i}\right)^\frac{1}{\beta}r_{i}<r_{i-1}<+\infty,\\
\left(\frac{P_{i+1}}{P_i}\right)^\frac{i}{\beta}r_{i}<r_{i+1}<+\infty,\\
...~~~~~~~~~~~\\
\left(\frac{P_{K}}{P_j}\right)^\frac{1}{\beta}r_{i}<r_{{K}}<+\infty.
\end{array}
 \right.
\end{equation}
Then we get the lemma. \hfill\ensuremath{\blacksquare}
\subsection{Proof of Theorem 1}\label{sec:prooftheroem1}
According to (\ref{rate2}), we have
\begin{align}\label{ui}
\mathcal{U}_{1,i}&=
\int_0^\infty\mathbb{E}_{\text{SINR}_i}\left[\begin{array}{c|}
                          \!\! \text{ln}(1+\text{SINR}_i(x))
                         \end{array}~x\right]f_{X_i}(x)\mathrm{d}{x}.
\end{align}
Because of $\mathbb{E}[X]=\int_0^\infty \mathbb{P}(X>t)\mathrm{d}{t}$ when $X>0$, we have
\begin{align}\label{esinr}
\mathbb{E}_{\text{SINR}_i}\left[\begin{array}{c|}
                           \!\!\text{ln}(1+\text{SINR}_i(x))
                         \end{array}~x\right]&=\int_0^\infty \mathbb{P}\left[\begin{array}{c|}
                           \!\!\text{ln}(1+\text{SINR}_i(x))>t
                         \end{array}~x\right]\mathrm{d}{t}
\nonumber\\
&=\int_0^\infty \mathbb{P}\left[\begin{array}{c|}
                           \!\!g_{i,0}>x^\beta P_i^{-1}I_r(e^t-1)
                         \end{array}~x\right]\mathrm{d}{t},
\end{align}
where
\begin{align}\label{pgi0}
&\mathbb{P}\left[\begin{array}{c|}
                           \!\!g_{i,0}{\setlength\arraycolsep{0.5pt}>}x^\beta P_i^{-1}I_r(e^t{\setlength\arraycolsep{0.5pt}-}1)
        \end{array}~x\right]{\setlength\arraycolsep{0.5pt}=}\mathbb{E}_{I_r}\left[\mathbb{P}\left[\begin{array}{c|}
                           \!\!g_{i,0}{\setlength\arraycolsep{0.5pt}>}x^\beta P_i^{-1}I_r(e^t{\setlength\arraycolsep{0.5pt}-}1)
                         \end{array}~I_r,x\right]\right]{{\setlength\arraycolsep{0.5pt}=}}\mathbb{E}_{I_r}\left[\begin{array}{c|}
                           e^{-x^\beta P_i^{-1}I_r(e^t-1)}
                         \end{array}~x\right]
\nonumber\\
&{=}e^{-x^\beta P_i^{-1}(e^t-1)\sigma^2}\mathbb{E}\left[\begin{array}{c|}
                           {e^{-x^\beta P_i^{-1}(e^t-1)\sum\limits_{j=1}^3I_j}}
                         \end{array}~x\right]\stackrel{(a)}{=}e^{-x^\beta P_i^{-1}(e^t-1)\sigma^2}\prod_{j=1}^3\mathcal{L}_{I_j}\left[x^\beta P_i^{-1}(e^t-1)\right],
\end{align}
where Step (a) follows the fact that $\Phi_i$ are mutually independent PPPs. Here, the interference comes from the actually active nodes in the $i$-th tier with density $\lambda'_i$ for $i=1,2,3$. So the Laplace transform $\mathcal{L}_{I_j}\left[x^\beta P_i^{-1}(e^t-1)\right]$ is
\begin{align}\label{laplace1}
&\mathcal{L}_{I_j}\left[x^\beta P_i^{-1}(e^t-1)\right]=\mathbb{E}_{I_j}\left[e^{-x^\beta P_i^{-1}(e^t-1)I_j}\right]
=\mathbb{E}_{\Phi_j}\left[e^{-x^\beta P_i^{-1}(e^t-1)\sum_{k\in\phi_j\setminus B_{i,0}}P_jh_{jk}|Y_{jk}|^{-\beta}}\right]\nonumber\\
&{=}e^{-2\pi\lambda'_j\int_{z_j}^\infty\left[1-\mathcal{L}_{h_{jk}}(x^\beta\frac{P_j}{P_i}(e^t-1)y^{-\beta})\right]y\mathrm{d}{y}}
=e^{-2\pi\lambda'_j\int_{z_j}^\infty\left[1-({1+x^\beta\frac{P_j}{P_i}(e^t-1)y^{-\beta}})^{-1}\right]y\mathrm{d}{y}}\\
&{=}e^{-2\pi\lambda'_j\mathlarger{\int}_{z_j}^\infty\frac{y}{1+[x^\beta\frac{P_j}{P_i}(e^t-1)]^{-1}y^{\beta}}\mathrm{d}{y}}
\stackrel{(a)}{=}e^{[-\pi\lambda'_j\left(\frac{P_j}{P_i}\right)^{\frac{2}{\beta}}\!x^2\!(e^t\!-\!1)^\frac{2}{\beta}\mathlarger{\int}_{(e^t\!-\!1)^{-\frac{2}{\beta}}}^\infty({1+u^{\frac{\beta}{2}}})^{-1}\mathrm{d}{u}]}
=e^{-\pi\lambda'_j\left(\frac{P_j}{P_i}\right)^{\frac{2}{\beta}}x^2\mathcal{Z}_1(e^t-1)},\nonumber
\end{align}
where $z_j=x(\frac{P_j}{P_i})^{\frac{1}{\beta}}$ is the distance between the reference user and its closest interference node. Meanwhile, by using change of variables with $u=[x^\beta\frac{P_j}{P_i}(e^t-1)]^{-\frac{2}{\beta}}y^2$, we have Step (a). In the expression above, we use
$\mathcal{Z}_1(e^t\!-\!1)=\frac{2(e^t-1)}{\beta-2}{}_2F_1[1,1-\frac{2}{\beta};2-\frac{2}{\beta};1-e^t]$,
where ${}_2F_1[\cdot]$ denotes the Gauss hypergeometric function. Accordingly, we have
\begin{align}\label{ui_z1}
\!\!\!\mathcal{U}_{1,i}&{\setlength\arraycolsep{0.5pt}=}
\frac{2\pi \lambda_i}{\mathcal{G}_{3,i}}\int_0^\infty\!\!\!\int_0^\infty x \text{exp}\left\{-x^\beta P_i^{-1}(e^t-1)\sigma^2{\setlength\arraycolsep{0.5pt}-}
\pi x^2\sum\limits_{j=1}^3\left[\lambda_j{\setlength\arraycolsep{0.5pt}+}
\lambda'_j\mathcal{Z}_1(e^t{\setlength\arraycolsep{0.5pt}-}
1)\right]\left(\frac{P_j}{P_i}\right)^{\frac{2}{\beta}}\right\}\mathrm{d}{t}\mathrm{d}{x}.
\end{align}
Using (\ref{prefer}), $\sum_{j=1}^3\lambda_j\left(\frac{P_j}{P_i}\right)^{\frac{2}{\beta}}=\frac{\lambda_i}{\mathcal{G}_{3,i}}$, and $\lambda_2=\lambda'_2,\lambda_3=\lambda'_3$, we get (\ref{ui_z}). \hfill\ensuremath{\blacksquare}
\subsection{Proof of Theorem 2}\label{sec:prooftheroem2}
Referring to the analysis of Theorem \ref{rateltheorem}, we have
\begin{align}\label{ui_2}
\mathcal{U}_{2,i}=\!\int_0^\infty\!\!\!\int_0^\infty\!\!\!e^{-\frac{\sigma^2x^\beta(e^t-1)}{P_i}}\!\!\prod_{j=1}^3\!\!\mathcal{L}_{I_j}\!\!\left[x^\beta P_i^{-1}(e^t\!-\!1)\right]\!\widehat{f}_{X_i}(x)\mathrm{d}{t}\mathrm{d}{x}.
\end{align}
For $j=1$, the distance between the reference user and its closest interfering cache-enabled user can be as close as 0. The Laplace transform $\mathcal{L}_{I_1}\left[x^\beta P_i^{-1}(e^t-1)\right]$ can be computed as
\begin{align}\label{laplace1_2}
\mathcal{L}_{I_1}\left[x^\beta P_i^{-1}(e^t-1)\right] &=e^{-2\pi\lambda'_1\mathlarger{\int}_{\widehat{a}}^\infty\left[1-\mathcal{L}_{h_{1k}}(x^\beta\frac{P_1}{P_i}(e^t-1)y^{-\beta})\right]y\mathrm{d}{y}}=e^{-\pi\lambda'_1\left(\frac{P_1}{P_i}\right)^{\frac{2}{\beta}}x^2\mathcal{Z}_2(a)},
\end{align}
where $\widehat{a}~ \text{and}~ a$ are variables as small as $0$ and
$\mathcal{Z}_2(a)\!=\!(e^t\!-\!1)^{\frac{2}{\beta}}\frac{2a^{\frac{2-\beta}{2}}}{\beta-2}{}_2F_1[1,1-\frac{2}{\beta};2-\frac{2}{\beta};-a^{\frac{-\beta}{2}}]$. Plugging (\ref{laplace1}), (\ref{fx2}) and (\ref{laplace1_2}) into (\ref{ui_2}), we obtain the average ergodic rate of the reference user in (\ref{ui_z_2}) based on $\sum_{j=2}^3\lambda_j(\frac{P_j}{P_i})^{\frac{2}{\beta}}=\frac{\lambda_i}{\mathbb{P}_{i,j}}$ and
$[\sum_{j=2}^3\frac{\lambda_j}{\lambda_1}(\frac{P_j}{P_1})^{\frac{2}{\beta}}]^{-1}=\frac{\mathcal{G}_{3,1}}{1\!-\!\mathcal{G}_{3,1}}$.\hfill\ensuremath{\blacksquare}
\subsection{Proof of Joint PDF of $X_1,Y_j$}\label{sec:fxy}
The joint probability of $ 0<X_1<x,0<Y_j<y,(y>(\frac{P_j}{P_1})^{\frac{1}{\beta}}x)$ is
\begin{align}
   &\mathbb{P}(0<X_1<x,0<Y_j<y)=\mathbb{P}(\begin{array}{c|c}
                                    0<r_1<x,0<r_j<y&C_{1,j,k}
                                  \end{array})\nonumber\\
   &{=}\mathbb{P}\left(0<\!r_1\!<x, (\frac{P_j}{P_1})^{\frac{1}{\beta}}r_1<\!r_j\!<y,(\frac{P_k}{P_j})^{\frac{1}{\beta}}r_j<\!r_k\!<\infty\right)\mathbb{P}_{1,j,k}^{-1}\nonumber\\
   &{=}\mathbb{P}_{1,j,k}^{-1}\int_0^x\int_{(\frac{P_j}{P_1})^{\frac{1}{\beta}}r_1}^y\int_{(\frac{P_k}{P_j})^{\frac{1}{\beta}}r_j}^\infty f_{R_1,R_j,R_k}(r_1,r_j,r_k)\mathrm{d}{r_k}\mathrm{d}{r_j}\mathrm{d}{r_1}.
\end{align}
Then the joint PDF of $x,y$ is
\begin{align}\label{fxy}
&f_{X_1,Y_j}(x,y)=\frac{\partial \mathbb{P}(X_1<x,Y_j<y)}{\partial x \partial y}=\left\{
\begin{array}{cc}
\frac{4\pi^2\lambda_1\lambda_jxy\mathbb{P}_{1,j,k}^{-1}}{e^{\pi\lambda_1x^2+\pi\lambda_jy^2\left[1+\frac{\lambda_k}{\lambda_j}\left(\frac{P_k}{P_j}\right)^{\frac{2}{\beta}}\right]}},&y>(\frac{P_j}{P_1})^{\frac{1}{\beta}}x\\
~~~~~~~~~~~~0~~~~~~~~~~~~~~~,&\text{else.}\\
\end{array}
 \right.
\end{align}
\hfill\ensuremath{\blacksquare}
\subsection{Proof of Theorem 3}\label{sec:prooftheroem3}
In this Case, (\ref{ui}) can be rewritten as
\begin{align}
\mathcal{U}_{3,j}&=\!\!\!
\int_0^\infty\!\!\!\!\int_0^\infty\!\!\!\mathbb{E}_{\text{SINR}_j}\!\left[\begin{array}{c|}
                          \!\! \text{ln}(1+\text{SINR}_j(y))
                         \end{array}~x,y\right]\!f_{X_1,Y_j}\!(x,y)\mathrm{d}{x}\mathrm{d}{y}.
\end{align}
Similar to (\ref{esinr}), we have
\begin{align}
&\mathbb{E}_{\text{SINR}_j}\left[\begin{array}{c|}
                           \!\!\text{ln}(1+\text{SINR}_j(y))
                         \end{array}~x,y\right]=\int_0^\infty \mathbb{P}\left[\begin{array}{c|}
                           \!\!g_{j,0}>y^\beta P_j^{-1}I_r(e^t-1)
                         \end{array}~x,y\right]\mathrm{d}{t},
\end{align}
where
\begin{align}
&\mathbb{P}\left[\begin{array}{c|}
                           \!\!g_{j,0}>y^\beta P_j^{-1}I_r(e^t-1)
        \end{array}~x,y\right]
=e^{-y^\beta P_j^{-1}(e^t-1)\sigma^2}\prod_{i=1}^3\mathcal{L}_{I_i}\left[y^\beta P_j^{-1}(e^t-1)\right].
\end{align}
Similarly, the Laplace transform of $I_i$ for $i=2, 3$ is
\begin{align}
&\mathcal{L}_{I_i}\left[y^\beta P_j^{-1}(e^t-1)\right]=e^{-\pi\lambda_i\left(\frac{P_i}{P_j}\right)^{\frac{2}{\beta}}y^2\mathcal{Z}_1(e^t-1)}.
\end{align}
And for $i=1$, we have
\begin{align}
&\mathcal{L}_{I_1}\left[y^\beta P_j^{-1}(e^t\!-\!1)\right]=e^{\!-\!2\pi\lambda'_1\mathlarger{\int}_x^\infty\frac{v\mathrm{d}{\upsilon}}{1+[y^\beta\frac{P_1}{P_j}(e^t\!-\!1)]^{\!-\!1}v^\beta}}=e^{-\pi\lambda'_1y^2[\frac{P_1}{P_j}(e^t-1)]^{\frac{2}{\beta}}\mathlarger{\int}_{x^2[y^\beta\frac{P_1}{P_j}(e^t-1)]^{-\frac{2}{\beta}}}^\infty\frac{\mathrm{d}{u}}{1+u^{\frac{\beta}{2}}}}.
\end{align}
Considering $\sigma^2\rightarrow0$, we have
\begin{align}\label{zuichang}
&\mathcal{U}_{3,j}{\setlength\arraycolsep{0.5pt}=}
\!\!\!\int_0^\infty\!\!\!
\int_0^\infty\!\!\!\int_0^{(\frac{P_1}{P_j})^{\frac{1}{\beta}}y}\frac{\prod\limits_{i=j,k}\mathcal{L}_{I_i}\left[y^\beta P_j^{-1}(e^t{\setlength\arraycolsep{0.5pt}-}
1)\right]}{\text{exp}\{{
\pi\lambda'_1y^2[\frac{P_1}{P_j}(e^t-1)]^{\frac{2}{\beta}}\int_{[y^\beta\frac{P_1}{P_j}(e^t-1)]^{-\frac{2}{\beta}}x^2}^\infty\frac{1}{1+u^{\frac{\beta}{2}}}\mathrm{d}{u}}\}}f_{X_1,Y_j}(x,y)\mathrm{d}{x}\mathrm{d}{y}\mathrm{d}{t}\nonumber\\
&\stackrel{(a)}{=}\!\!\!\int_0^\infty\!\!\!
\int_0^\infty\!\!\!\int_0^{1}\frac{\prod\limits_{i=j,k}\mathcal{L}_{I_i}\left[y^\beta P_j^{-1}(e^t{\setlength\arraycolsep{0.5pt}-}
1)\right]}{\text{exp}\{{
\pi\lambda'_1y^2[\frac{P_1}{P_j}(e^t{\setlength\arraycolsep{0.5pt}-}
1)]^{\frac{2}{\beta}}\int_{(e^t-1)^{{\setlength\arraycolsep{0.5pt}-}
\frac{2}{\beta}}x^2}^\infty\frac{1}{1+u^{\frac{\beta}{2}}}\mathrm{d}{u}}\}}f_{X_1,Y_j}\left((\frac{P_1}{P_j})^\frac{1}{\beta}yx,y\right)(\frac{P_1}{P_j})^\frac{1}{\beta}y\mathrm{d}{x}\mathrm{d}{y}\mathrm{d}{t}\nonumber\\
&=\!\!\!\int_0^\infty\!\!\!\int_0^{1}\frac{2\lambda_1\lambda_j\mathbb{P}_{1,j,k}^{-1}(\frac{P_1}{P_j})^{\frac{2}{\beta}}x}{\left\{\lambda_j\left[1+\frac{\lambda_k}{\lambda_j}(\frac{P_k}{P_j})^{\frac{2}{\beta}}\right]\left[1+\mathcal{Z}
_1(e^t-1)\right]+\left[\lambda'_1\mathcal{Z}_3(e^t-1)+\lambda_1\right](\frac{P_1}{P_j})^{\frac{2}{\beta}}x^2\right\}^2}\mathrm{d}{x}\mathrm{d}{t},
\end{align}
where $\mathcal{Z}_3(e^t-1)=\frac{2(e^t-1)}{\beta-2}x^{-\beta}{}_2F_1[1,1-\frac{2}{\beta};2-\frac{2}{\beta};(1-e^t)x^{-\beta}]$. By using a change of variables $x=(\frac{P_1}{P_j})^{\frac{1}{\beta}}yx'$, we obtain Step (a). Because $[\sum\limits_{j=2}^3\frac{\lambda_j}{\lambda_1}(\frac{P_j}{P_1})^{\frac{2}{\beta}}]^{-1}=\frac{\mathcal{G}_{3,1}}{1\!-\!\mathcal{G}_{3,1}}$, (\ref{zuichang}) can be simplified to
\begin{align}\label{u3}
\mathcal{U}_{3,j}=\int_0^\infty\!\!\!\!\int_0^{1}\!\!\!\frac{\frac{2x\mathcal{G}_{3,1}}{1\!-\!\mathcal{G}_{3,1}}\mathbb{P}_{1,j,k}^{-1}\mathbb{P}_{j,k}\mathrm{d}{x}\mathrm{d}{t}}{\!\left\{1\!+\!\mathcal{Z}
_1(e^t\!-\!1)\!+\!\frac{\mathcal{G}_{3,1}x^2}{1\!-\!\mathcal{G}_{3,1}}\left[1\!+\!\frac{\lambda'_1}{\lambda_1}\mathcal{Z}_3(e^t\!-\!1)\right]\!\right\}^2}.
\end{align}
Based on $\mathbb{P}_{1,j,k}^{-1}\mathbb{P}_{j,k}=\mathcal{G}_{3,1}^{-1}$, (\ref{u3}) turns to (\ref{rate3}) and the proof is finished. \hfill\ensuremath{\blacksquare}
\subsection{Proof of Theorem 4}\label{sec:prooftheroem4}
Accordingly, it is easy to obtain
\begin{align}
\mathcal{P}_{1,i}&=1-\mathbb{E}[\mathbb{P}[\text{SINR}_i(x)>\tau]]=1-\int_0^\infty \mathbb{P}\left[\text{SINR}_i(x)\begin{array}{c|}
                          \!\!>\tau
                         \end{array}~x\right]f_{X_i}(x)\mathrm{d}{x}.
\end{align}
Similar to (\ref{pgi0}) and (\ref{laplace1}), we have
\begin{align}
\!\!\!\mathbb{P}\left[\text{SINR}_i(x)\begin{array}{c|}
                          \!\!>\tau
                         \end{array}~x\right]&=\mathbb{P}\left[\begin{array}{c|}
                           \!\!g_{i,0}>x^\beta P_i^{-1}I_r\tau
        \end{array}~x\right]=e^{-x^\beta P_i^{-1}\tau\sigma^2}\prod_{j=1}^3\mathcal{L}_{I_j}\left[x^\beta P_i^{-1}\tau\right],
\end{align}
where $\mathcal{L}_{I_j}[x^\beta P_i^{{\setlength\arraycolsep{0.5pt}-}1}\tau]{\setlength\arraycolsep{0.5pt}=}e^{{\setlength\arraycolsep{0.5pt}-}\pi\lambda'_j(\frac{P_j}{P_i})^{\frac{2}{\beta}}x^2\mathcal{Z}_1(\tau)}$,  $\mathcal{Z}_1(\tau){\setlength\arraycolsep{0.5pt}=}\frac{2\tau}{\beta{\setlength\arraycolsep{0.5pt}-}2}{}_2F_1[1,1{\setlength\arraycolsep{0.5pt}-}\frac{2}{\beta};2{\setlength\arraycolsep{0.5pt}-}\frac{2}{\beta};{\setlength\arraycolsep{0.5pt}-}\tau]$.
Then we get (\ref{oi_z}). \hfill\ensuremath{\blacksquare}
\subsection{Proof of Theorem 5}\label{sec:prooftheroem5}
According to (\ref{outage1}), the average outage probability is given by
\begin{align}
\!\!\!\mathcal{P}_{2,i}&{\setlength\arraycolsep{0.5pt}=}1{\setlength\arraycolsep{0.5pt}-}\!\!\int_0^\infty \mathbb{P}[\text{SINR}_i(x)\begin{array}{c|}
                          \!\!{\setlength\arraycolsep{0.5pt}>}\tau
                         \end{array}~x]\widehat{f}_{X_i}(x)\mathrm{d}{x}{\setlength\arraycolsep{0.5pt}=}1{\setlength\arraycolsep{0.5pt}-}\!\!\int_0^\infty e^{-x^\beta P_i^{-1}\tau\sigma^2}\prod_{j=1}^3\mathcal{L}_{I_j}\left[x^\beta P_i^{-1}\tau\right]\widehat{f}_{X_i}(x)\mathrm{d}{x}.
\end{align}
Similar to (\ref{laplace1_2}), the Laplace transform of the interference resulted from the first tier becomes
\begin{align}
&\mathcal{L}_{I_1}\left[x^\beta P_i^{-1}\tau\right]=e^{-\pi\lambda'_1\left(\frac{P_1}{P_i}\right)^{\frac{2}{\beta}}x^2\mathcal{Z}_2(a)}.
\end{align}
We then have the theorem. {\hfill\ensuremath{\blacksquare}}
\subsection{Proof of Theorem 6}\label{sec:prooftheroem6}
Accordingly, we have
\begin{align}\label{52}
\mathcal{P}_{3,j}=1-\!\!\int_0^\infty\!\!\!\int_0^\infty \mathbb{P}\left[\text{SINR}_j(y)\begin{array}{c|}
                          \!\!>\tau
                         \end{array}~x,y\right]f_{X_1,Y_j}(x,y)\mathrm{d}{x}\mathrm{d}{y},
\end{align}
where
\begin{align}\label{53}
\!\!\!\mathbb{P}\left[\text{SINR}_j(y)\begin{array}{c|}
                          \!\!>\tau
                         \end{array}~x,y\right]
&=\mathbb{P}\left[\begin{array}{c|}
                           \!\!g_{j,0}>y^\beta P_j^{-1}I_r\tau
                         \end{array}~x,y\right]=e^{-\frac{y^\beta W\tau}{L_0P_j}}\prod_{i=1}^3\mathcal{L}_{I_i}\left[y^\beta P_j^{-1}\tau\right].
\end{align}
Meanwhile, we have $\mathcal{L}_{I_i}[y^\beta P_j^{-1}\tau]=\text{exp}[{-\pi\lambda_i(\frac{P_i}{P_j})^{\frac{2}{\beta}}y^2\mathcal{Z}_1(\tau)}],i=2,3$, and $\mathcal{L}_{I_1}[y^\beta P_j^{-1}\tau]=\text{exp}[{-\pi\lambda'_1y^2(\frac{P_1}{P_j}\tau)^{\frac{2}{\beta}}\int_{x^2(y^\beta\frac{P_1}{P_j}\tau)^{-\frac{2}{\beta}}}^\infty\frac{1}{1+u^{\frac{\beta}{2}}}\mathrm{d}{u}}]$.
Then we obtain (\ref{oi_z_3}) and the proof is finished. \hfill\ensuremath{\blacksquare}

\end{appendix}
\bibliographystyle{IEEEtran}
\bibliography{paper}

\begin{thebibliography}{10}
\providecommand{\url}[1]{#1}
\csname url@samestyle\endcsname
\providecommand{\newblock}{\relax}
\providecommand{\bibinfo}[2]{#2}
\providecommand{\BIBentrySTDinterwordspacing}{\spaceskip=0pt\relax}
\providecommand{\BIBentryALTinterwordstretchfactor}{4}
\providecommand{\BIBentryALTinterwordspacing}{\spaceskip=\fontdimen2\font plus
\BIBentryALTinterwordstretchfactor\fontdimen3\font minus
  \fontdimen4\font\relax}
\providecommand{\BIBforeignlanguage}[2]{{%
\expandafter\ifx\csname l@#1\endcsname\relax
\typeout{** WARNING: IEEEtran.bst: No hyphenation pattern has been}%
\typeout{** loaded for the language `#1'. Using the pattern for}%
\typeout{** the default language instead.}%
\else
\language=\csname l@#1\endcsname
\fi
#2}}
\providecommand{\BIBdecl}{\relax}
\BIBdecl

\bibitem{myself}
C.~Yang, Z.~Chen, Y.~Yao, and B.~Xia, ``{Performance Analysis of Wireless
  Heterogeneous Networks with Pushing and Caching},'' \emph{in Proc. IEEE ICC},
  Jun. 2015.

\bibitem{Cisco}
Cisco, ``{Cisco Visual Networking Index: Global Mobile Data Traffic Forecast
  Update, 2013-2018},'' 2014.

\bibitem{5Gmultitier}
E.~Hossain, M.~Rasti, H.~Tabassum, and A.~Abdelnasser, ``{Evolution toward 5G
  multi-tier cellular wireless networks: An interference management
  perspective},'' \emph{IEEE Trans. Wireless Commun.}, vol.~21, no.~3, pp.
  118--127, Jun. 2014.

\bibitem{diversity}
H.~Liu, Z.~Chen, X.~Tian, X.~Wang, and M.~Tao, ``{On content-centric wireless
  delivery networks},'' \emph{IEEE Wireless Commun.}, vol.~21, no.~6, pp.
  118--125, Dec. 2014.

\bibitem{kongtao}
K.~Wang, Z.~Chen, and H.~Liu, ``{Push-Based Wireless Converged Networks for
  Massive Multimedia Content Delivery},'' \emph{IEEE Trans. Wireless Commun.},
  vol.~13, no.~5, pp. 2894--2905, May 2014.

\bibitem{femto2}
N.~Golrezaei, A.~Molisch, A.~Dimakis, and G.~Caire, ``{Femtocaching and
  device-to-device collaboration: A new architecture for wireless video
  distribution},'' \emph{IEEE Commun. Mag.}, vol.~51, no.~4, pp. 142--149,
  April 2013.

\bibitem{capacity1}
H.~ElSawy, E.~Hossain, and M.~Haenggi, ``{Stochastic Geometry for Modeling,
  Analysis, and Design of Multi-Tier and Cognitive Cellular Wireless Networks:
  A Survey},'' \emph{IEEE Commun. Surveys and Tutorials}, vol.~15, no.~3, pp.
  996--1019, Jun. 2013.

\bibitem{ppp2}
J.~Andrews, F.~Baccelli, and R.~Ganti, ``{A Tractable Approach to Coverage and
  Rate in Cellular Networks},'' \emph{IEEE Trans. Commun.}, vol.~59, no.~11,
  pp. 3122--3134, Nov. 2011.

\bibitem{Userassociation}
Q.~Ye, B.~Rong, Y.~Chen, M.~Al-Shalash, C.~Caramanis, and J.~Andrews, ``{User
  Association for Load Balancing in Heterogeneous Cellular Networks},''
  \emph{IEEE Trans. Wireless Commun.}, vol.~12, no.~6, pp. 2706--2716, Jun.
  2013.

\bibitem{femto}
K.~Shanmugam, N.~Golrezaei, A.~Dimakis, A.~Molisch, and G.~Caire,
  ``{FemtoCaching: Wireless Content Delivery Through Distributed Caching
  Helpers},'' \emph{IEEE Trans. Inf. Theory}, vol.~59, no.~12, pp. 8402--8413,
  Dec. 2013.

\bibitem{seven}
J.~Andrews, ``{Seven ways that HetNets are a cellular paradigm shift},''
  \emph{IEEE Commun. Mag.}, vol.~51, no.~3, pp. 136--144, Mar. 2013.

\bibitem{Zipf}
M.~Cha, H.~Kwak, P.~Rodriguez, Y.~Ahn, and S.~Moon, ``{I Tube, You Tube,
  Everybody Tubes: Analyzing The World's Largest User Generated Content Video
  System},'' in \emph{Proc. ACM SIGCOMM Internet Measurement}, Oct. 2007.

\bibitem{5G2}
E.~Ba{\c{s}}tu{\u{g}}, M.~Bennis, and M.~Debbah, ``{Living on the Edge: The
  Role of Proactive Caching in 5G Wireless Networks},'' \emph{IEEE Commun.
  Mag.}, vol.~52, no.~8, pp. 82--89, Aug. 2014.

\bibitem{5G1}
X.~Wang, M.~Chen, T.~Taleb, A.~Ksentini, and V.~Leung, ``{Cache in The Air:
  Exploiting Content Caching and Delivery Techniques for 5G Systems},''
  \emph{IEEE Tans. Commun. Mag.}, vol.~52, no.~2, pp. 131--139, Feb. 2014.

\bibitem{caching1}
U.~Niesen, D.~Shah, and G.~W. Wornell, ``{Caching in Wireless Networks},''
  \emph{IEEE Trans. Inf. Theory}, vol.~58, no.~10, pp. 6524--6540, Oct. 2012.

\bibitem{RAN}
H.~AhleHagh and S.~Dey, ``{Video Caching in Radio Access Network: Impact on
  Delay and Capacity},'' in \emph{Proc. WCNC}, Apr. 2012.

\bibitem{coding}
M.~Maddah-Ali and U.~Niesen, ``{Fundamental Limits of Caching},'' \emph{IEEE
  Trans. Inf. Theory}, vol.~60, no.~5, pp. 2856--2867, May 2014.

\bibitem{geographic}
B.~B{\l}aszczyszyn and A.~Giovanidis, ``{Optimal Geographic Caching in Cellular
  Networks},'' in \emph{Proc. ICC}, Jun. 2015.

\bibitem{routing}
M.~Dehghan, A.~Seetharam, B.~Jiang, T.~He, T.~Salonidis, J.~Kurose, D.~Towsley,
  and R.~Sitaraman, ``{On the Complexity of Optimal Routing and Content Caching
  in Heterogeneous Networks},'' in \emph{Proc. IEEE INFOCOM}, Apr. 2015.

\bibitem{niuzhisheng}
Y.~Bao, X.~Wang, S.~Zhou, and Z.~Niu, ``{An Energy-Efficient Client Pre-Caching
  Scheme with Wireless Multicast for Video-on-Demand Services},'' in
  \emph{Proc. APCC}, Oct. 2012.

\bibitem{pushing}
X.~Wang, M.~Chen, Z.~Han, T.~Kwon, and Y.~Choi, ``{Content dissemination by
  pushing and sharing in mobile cellular networks: An analytical study},'' in
  \emph{IEEE Proc. MASS}, Oct. 2012.

\bibitem{D2D}
X.~Wang, M.~Chen, Z.~Han, D.~O. Wu, and T.~T. Kwon, ``{TOSS: Traffic offloading
  by social network service-based opportunistic sharing in mobile social
  networks},'' in \emph{Proc. IEEE INFOCOM}, Apr. 2014.

\bibitem{grid}
N.~Golrezaei, P.~Mansourifard, A.~Molisch, and A.~Dimakis, ``{Base-Station
  Assisted Device-to-Device Communications for High-Throughput Wireless Video
  Networks},'' \emph{IEEE Trans. Wireless Commun.}, vol.~13, no.~7, Jul. 2014.

\bibitem{origin}
S.~Weber, J.~Andrews, and N.~Jindal, ``{An Overview of the Transmission
  Capacity of Wireless Networks},'' \emph{IEEE Trans. Commun.}, vol.~58,
  no.~12, pp. 3593--3604, Dec. 2010.

\bibitem{ppp3}
R.~Heath, M.~Kountouris, and T.~Bai, ``{Modeling Heterogeneous Network
  Interference Using Poisson Point Processes},'' \emph{IEEE Trans. Signal
  Processing}, vol.~61, no.~16, pp. 4114--4126, Aug. 2013.

\bibitem{ppp1}
S.~Singh and J.~G. Andrews, ``{Joint Resource Partitioning and Offloading in
  Heterogeneous Cellular Networks},'' \emph{IEEE Trans. Wireless Commun.},
  vol.~13, no.~2, pp. 888--901, Feb. 2014.

\bibitem{capacity2}
D.~Cao, S.~Zhou, and Z.~Niu, ``{Optimal Base Station Density for
  Energy-efficient Heterogeneous Cellular Networks},'' in \emph{Proc. ICC},
  Jun. 2012.

\bibitem{SINR}
H.-S. Jo, Y.~J. Sang, P.~Xia, and J.~Andrews, ``{Heterogeneous Cellular
  Networks with Flexible Cell Association: A Comprehensive Downlink SINR
  Analysis},'' \emph{IEEE Trans. Wireless Commun.}, vol.~11, no.~10, pp.
  3484--3495, October 2012.

\bibitem{cache-enable}
E.~Ba\c{s}tu\u{g}, M.~Bennis, M.~Kountouris, and M.~Debbah, ``{Cache-enabled
  Small Cell Networks: Modeling and Tradeoffs},'' \emph{EURASIP Journal on
  Wireless Commun. and Networking}, vol. 2015, no.~1, p.~41, 2015.

\bibitem{trade-off}
A.~Altieri, P.~Piantanida, L.~R. Vega, and C.~Galarza, ``{On Fundamental
  Trade-offs of Device-to-Device Communications in Large Wireless Networks},''
  \emph{IEEE Trans. Wireless Commun.}, vol.~PP, no.~99, pp. 1--1, May 2015.

\bibitem{scaling}
N.~Golrezaei, A.~Dimakis, and A.~Molisch, ``{Scaling Behavior for
  Device-to-Device Communications With Distributed Caching},'' \emph{IEEE
  Trans. Inf. Theory}, vol.~60, no.~7, pp. 4286--4298, Jul. 2014.

\bibitem{Queue}
M.~Karray and M.~Jovanovic, ``{A Queueing Theoretic Approach to the
  Dimensioning of Wireless Cellular Networks Serving Variable-Bit-Rate
  Calls},'' \emph{IEEE Trans. Veh. Technol.}, vol.~62, no.~6, pp. 2713--2723,
  Jul. 2013.

\bibitem{dummy}
P.~Nardelli, M.~Kountouris, P.~Cardieri, and M.~Latva-Aho, ``{Throughput
  Optimization in Wireless Networks Under Stability and Packet Loss
  Constraints},'' \emph{IEEE Trans. Mobile Comput.}, vol.~13, no.~8, pp.
  1883--1895, Aug. 2014.

\end{thebibliography}

\end{document}